\renewcommand{\paragraph}{\subsubsection*}
\providecommand{\urlstyle}[1]{}
\providecommand{\doi}[1]{\href{http://dx.doi.org/#1}{\nolinkurl{doi:#1}}}
\newtheorem{theorem}{Theorem}
\newtheorem{proposition}[theorem]{Proposition}
\newtheorem{lemma}[theorem]{Lemma}
\newcommand{\N}{\mathbb{N}}
\newcommand{\Z}{\mathbb{Z}}
\newcommand{\Q}{\mathbb{Q}}
\newcommand{\R}{\mathbb{R}}
\newcommand{\C}{\mathbb{C}}
\newcommand{\K}{\mathbb{K}}
 \renewcommand{\vec}[1]{{\bm #1}}
\newcommand{\lcm}{\mathrm{lcm}}
\newcommand{\NP}{\bf{NP}}
\newcommand{\PSPACE}{\bf{PSPACE}}
\newcommand{\AM}{\bf{AM}}
\newcommand{\sharpP}{\mbox{{\#\bf P}}}
 \newcommand{\HN}{\mathrm{HN}}
 \newcommand{\HNZ}{\mathrm{HNP}}
 \newcommand{\DIM}{\mathrm{DIM}}
\begin{document}

\title{A parametric version of the Hilbert Nullstellensatz}

\author[R.~Ait El Manssour]{Rida Ait El Manssour}
\address{Rida Ait El Manssour, Department of Computer Science, University of Oxford, Oxford, UK}
\email{rida.aitelmanssour@cs.ox.ac.uk}

\author[N.~Balaji]{Nikhil Balaji}
\address{Nikhil Balaji, IIT Delhi, New Delhi, India}
\email{nbalaji@cse.iitd.ac.in}

\author[K.~Nosan]{Klara Nosan}
\address{Klara Nosan, Universit\'e Paris Cité, CNRS, IRIF, Paris, France}
\email{nosan@irif.fr}

\author[M.~Shirmohammadi]{Mahsa Shirmohammadi}
\address{Mahsa Shirmohammadi, Universit\'e Paris Cité, CNRS, IRIF, Paris, France}
\email{mahsa@irif.fr}

\author[J.~Worrell]{James Worrell}
\address{James Worrell, Department of Computer Science, University of Oxford, Oxford, UK}
\email{jbw@cs.ox.ac.uk}

\begin{abstract}
 Hilbert's Nullstellensatz  is a fundamental result in algebraic geometry that gives a necessary and sufficient condition for 
 a finite collection of multivariate polynomials to have a common zero in an algebraically closed field.
 Associated with this result, there is the computational problem $\HN$  
of determining whether a system of polynomials with coefficients in the field of rational numbers has a common zero over the field of algebraic numbers.   

In an influential paper, Koiran showed that $\HN$ 
can be determined in the polynomial hierarchy assuming the Generalised Riemann Hypothesis (GRH).  More precisely, he showed that $\HN$ lies in the complexity class ${\AM}$ under GRH.  In a later work he generalised this result by showing that 
the problem $\DIM$, which asks to determine 
the dimension of the set of solutions of a
given
polynomial system, also lies in ${\AM}$
subject to GRH.

In this paper we study the solvability of polynomial equations over arbitrary algebraically closed fields of characteristic zero.  Up to isomorphism, 
every such field is the algebraic closure of a field of
rational functions.  We thus formulate a parametric version of $\HN$, called $\HNZ$, in which the input is a system of polynomials with coefficients
in a function field $\Q(\vec{x})$ and the task is to determine whether
the polynomials have a common zero in the algebraic closure $\overline{\Q(\vec{x})}$.

We observe that
Koiran's proof that $\DIM$ lies in {\AM} can be interpreted as  
a randomised polynomial-time reduction 
of $\DIM$ to $\HNZ$, followed by an argument that $\HNZ$
lies in ${\AM}$.  Our main contribution is a self-contained proof 
that $\HNZ$ lies in ${\bf AM}$ that follows the same basic idea as Koiran's argument -- namely random instantiation of the parameters --  but whose justification is purely algebraic, relying on a parametric version of Hilbert's Nullstellensatz, and avoiding 
recourse to semi-algebraic geometry.

\end{abstract}

\maketitle

\section{Introduction}

Hilbert's Nullstellensatz is a fundamental result in algebraic geometry. 
Given a system of polynomial equations 
\begin{equation}
\label{eq:system-hn}
f_1(y_1,\ldots,y_n) = 0,\; \ldots, \; f_\ell(y_1,\ldots,y_n) = 0
\end{equation}
where $f_i \in k[y_1, \dots, y_n]$ for $k$ a field with algebraic closure $K$, the Nullstellensatz says that the system is unsatisfiable over $K$ if and only if there exist polynomials $g_1, \dots, g_\ell \in k[y_1, \dots, y_n]$ such that
\begin{equation} \label{eqn:ideal}
\sum_{i=1}^\ell f_i g_i = 1
\end{equation}

A naturally associated computational problem asks to 
determine whether a given family of polynomials $f_1, \dots, f_\ell$ has a common zero.   
More specifically, we have the problem
$\HN$ in which the input is system of polynomial equations with rational coefficients and the task is to determine
whether there exists a solution in the field of algebraic numbers~$\overline{\Q}$.\footnote{By the Nullstellensatz this is equivalent to asking whether a system of polynomial equations with rational coefficients admits a common solution over $\C$.}  

The characterisation given in Equation~\eqref{eqn:ideal} essentially reduces $\HN$ to an \emph{ideal membership problem}, namely whether the constant 1 lies in the ideal generated by the $f_i$.  By a result of Mayr and Meyer~\cite{mayr-meyer}, this places $\HN$ in the complexity class {\bf EXPSPACE}. On the other hand, it is easy to reduce the Boolean satisfiability problem to $\HN$, making it at least {\NP}-hard.
 
 Over the years, various \emph{Effective Nullstellens{\"a}tze}~\cite{kollar, krick01, jel-effective} have yielded single-exponential degree bounds on $g_i$, reduces the ideal membership problem arising from the Nullstellensatz to that of solving a single-exponential-sized system of linear equations. This yields a {\PSPACE} upper bound for $\HN$. In an influential paper~\cite{KOIRAN1996273, KOIRAN-technical}, Koiran proved that $\HN$ lies in the complexity class {\AM} (and thereby in the second level of the polynomial hierarchy) if one assumes GRH. He achieves this bound by exhibiting a \emph{gap theorem}: For any system $\mathcal{S}$ of polynomial equations with integer coefficients, if $\mathcal S$ is satisfiable over $\overline{\Q}$ then there are at least~$B$ primes $p$ such that $\mathcal{S}$ is satisfiable over 
 the finite field
 $\mathbb{F}_p$, whereas if $\mathcal S$
 is not satisfiable over $\overline{\Q}$ then there are 
 at most~$A$ primes for which $\mathcal{S}$ is satisfiable over $\mathbb{F}_p$. Here $A < B$ are both numbers whose magnitude is a single-exponential function of the parameters (namely, number of variables, degree, bit size of coefficients, number of polynomial equations) of the system $\mathcal{S}$. Therefore to decide $\HN$ it suffices to count the number of primes for which $\mathcal{S}$ is satisfiable over~$\mathbb{F}_p$.  Since 
 satisfiability over~$\mathbb{F}_p$
 has an easy~{\NP} procedure, one obtains a $\sharpP^{\NP}$ algorithm for $\HN$. By further observing that there is a sufficiently large gap between $A$ and~$B$, Koiran proves that it suffices to approximately count the number of primes, which yields the claimed upper bound for the problem.

In this paper we study a version of $\HN$ over arbitrary algebraically closed fields 
of characteristic zero.  In general, such fields have the form $\overline{\Q(\vec{x})}$ for 
$\vec{x}$ a set of indeterminates.  Specifically we consider the problem $\HNZ$, which asks to determine 
whether a system of multivariate polynomials with coefficients in $\Q(\vec{x})$ has a solution
in $\overline{\Q(\vec{x})}$, where $\vec{x}=(x_1,\ldots,x_m)$.
Evidently $\HNZ$ generalises $\HN$.  We remark also that
systems of polynomial equations with coefficients in $\Q(\vec{x})$ are
central objects of study in algebraic combinatorics and in the theory of formal languages, where they are used to specify generating functions of combinatorial objects (see, e.g., the overviews in~\cite{BanderierD15,KuichSalomaa}).

We present a generalisation of the technique introduced in \cite{KOIRAN1996273}, allowing the use of algebraic arguments to establish a randomised polynomial-time reduction from $\HNZ$
to $\HN$.  In particular, given a system of polynomials with coefficients in $\Q(\vec{x})$, we compute a bound $D$ such that if we specialise the variables $x_i$ uniformly at randomly from $\{1,\ldots,D\}$ and use Koiran's algorithm to determine whether the specialised system is satisfiable over $\overline{\Q}$, with high enough probability, the algorithm will give a correct answer to the problem over $\overline{\Q(\vec{x})}$. We thus obtain our main result:

\begin{restatable}{theorem}{thmmain}
\label{thm:main}
$\HNZ$ is in $\AM$ assuming GRH. 
\end{restatable}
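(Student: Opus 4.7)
My plan is to give a randomised polynomial-time reduction from $\HNZ$ to $\HN$ by substituting integer values for the parameters $\vec{x}$, and then to invoke Koiran's $\AM$ procedure for $\HN$ on the resulting integer system. Correctness of the reduction rests on what I will call a \emph{parametric Hilbert Nullstellensatz} with effective degree bounds. Precisely, given the input system $f_1,\ldots,f_\ell\in\Q(\vec{x})[\vec{y}]$ (which I first normalise to $\Q[\vec{x},\vec{y}]$ by clearing denominators), I will exhibit a non-zero $d(\vec{x})\in\Q[\vec{x}]$ whose total degree is singly-exponential in the bit-size of the input and which has the following property: for every $\vec{a}\in\Q^m$ with $d(\vec{a})\neq 0$, the specialised system $\{f_i(\vec{a},\vec{y})=0\}_i$ is satisfiable over $\overline{\Q}$ iff the parametric system is satisfiable over $\overline{\Q(\vec{x})}$.

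I would construct $d$ by splitting into cases. If the parametric system is unsatisfiable, the classical Hilbert Nullstellensatz over the field $\Q(\vec{x})$ provides $g_i\in\Q(\vec{x})[\vec{y}]$ with $\sum_i f_i g_i=1$. Multiplying through by a common denominator produces an identity $\sum_i f_i G_i = d_1(\vec{x})$ in $\Q[\vec{x},\vec{y}]$ with $d_1\neq 0$, which specialises to certify unsatisfiability whenever $d_1(\vec{a})\neq 0$. If the parametric system is satisfiable, Chevalley's theorem guarantees that the image of the projection $V(f_1,\ldots,f_\ell)\subseteq\A^{m+n}\to\A^m$ is constructible; since the generic fibre is non-empty, the projection is dominant and its image contains a Zariski-dense open subset of $\A^m$. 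The complement of this open set is cut out by a non-zero polynomial $d_2(\vec{x})\in\Q[\vec{x}]$, so non-emptiness of the fibre at $\vec{a}$ is guaranteed whenever $d_2(\vec{a})\neq 0$. Taking $d$ to be $d_1$ or $d_2$ according to the (unknown) case yields the desired polynomial; the reduction never needs to know which case actually holds.

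The principal obstacle is the quantitative one: bounding $\deg d_1$ and $\deg d_2$ singly-exponentially in the input parameters (number of variables, number of equations, degree, and coefficient bit-length). For $d_1$, this should follow from an effective Nullstellensatz (Koll\'ar, Jelonek) applied in the polynomial ring $\Q(\vec{x})[\vec{y}]$, while carefully tracking how the $\vec{x}$-degrees of the $f_i$ propagate through the denominators of the witnesses $g_i$. For $d_2$, I would use effective elimination (iterated resultants or a quantitative form of Chevalley) to bound the degree of a non-zero polynomial vanishing on the complement of the image of a projection, again obtaining a singly-exponential bound. This quantitative parametric Nullstellensatz is the technical heart of the argument and where the main algebraic work lies; everything beyond it is largely bookkeeping.

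With the degree bound in hand, Schwartz--Zippel gives $\Pr_{\vec{a}\sim\{1,\ldots,D\}^m}[d(\vec{a})=0]\leq\deg(d)/D$, so choosing $D$ singly-exponential (and hence representable in polynomially many bits) drives this failure probability below, say, $1/10$. The $\AM$ protocol is then immediate: Arthur samples $\vec{a}\in\{1,\ldots,D\}^m$ uniformly at random and, together with Merlin, executes Koiran's $\AM$ protocol (under GRH) on the specialised integer system, whose bit-size is polynomial in the input to $\HNZ$. Completeness and soundness follow by combining the Schwartz--Zippel bound with standard amplification of Koiran's protocol, preserving the one-round $\AM$ structure and placing $\HNZ$ in $\AM$ under GRH.
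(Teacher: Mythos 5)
Your reduction is the same as the paper's at the top level (random integer instantiation of the parameters followed by Koiran's $\AM$ protocol, with Schwartz--Zippel controlling the failure probability), and your treatment of the unsatisfiable case matches the paper's: it invokes an effective \emph{parametric} Nullstellensatz of d'Andrea--Krick--Sombra that directly bounds $\deg_{\vec{x}}$ of the non-zero polynomial $a(\vec{x})=\sum_i g_if_i$, which is exactly the quantity you propose to extract by clearing denominators and tracking $\vec{x}$-degrees through Koll\'ar/Jelonek. Where you genuinely diverge is the satisfiable case. You argue geometrically: the projection $V\to\A^m$ is dominant, Chevalley gives a dense open subset of the image, and a quantitative elimination bound controls the degree of a polynomial $d_2$ cutting out its complement. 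The paper instead proceeds by explicit algebra: it uses effective quantifier elimination over $\overline{\Q(\vec{x})}$ to produce a solution $\vec{\beta}$ of bounded degree, builds a primitive element $\theta$ with a monic minimal polynomial of controlled degree in $\vec{x}$ and $y$, writes $\beta_i=P_i(\theta)/b$ with $b\in\Z[\vec{x}]$ of singly-exponential degree (using a discriminant bound on denominators of integral elements), and shows that whenever $b(\vec{\alpha})\neq 0$ the evaluation homomorphism sending $\theta$ to a root of the specialised minimal polynomial carries $\vec{\beta}$ to a solution of $\mathcal{S}_{\vec{\alpha}}$ over $\overline{\Q}$. The two routes are comparable in strength: the quantitative elimination input you need for $d_2$ is precisely what the paper's quantifier-elimination theorem (Fitchas--Galligo--Morgenstern, restated in the appendix) supplies, and extracting a single bounded-degree $d_2$ from the resulting quantifier-free formula is routine (irreducibility of $\A^m$ forces one disjunct to be a dense open set whose complement is a hypersurface built from the formula's polynomials). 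Your version is shorter and more conceptual if one grants effective Chevalley as a black box; the paper's version is more self-contained and constructs the actual witness of the specialised system, which is what lets it keep the whole argument at the level of ring and field extensions. One minor point to tighten: when composing the randomised reduction with Koiran's protocol you should, as the paper does, invoke perfect completeness (or closure of $\AM$ under $\BPP$ reductions) to keep the error accounting clean, since in the satisfiable case the instantiation itself can fail with constant probability before Merlin ever speaks.
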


To further undertand the context of Theorem~\ref{thm:main}, it is useful to take a geometric view.
The set of common zeros of a collection of polynomials $f_1,\ldots,f_k \in \Q[y_1,\ldots,y_n]$ is an 
\emph{algebraic variety} $V \subseteq \overline{\Q}^n$.  The $\HN$ problem asks whether this variety is non-empty.  More generally, we have the 
$\DIM$ problem which, given an integer $d$, asks whether $V$ has dimension at least $d$.  Note that $\HN$ is precisely the problem $\DIM$ specialised to $d=0$.  A randomised polynomial-time reduction of $\DIM$ to $\HN$ is established in~\cite{koiran-dim}, thus placing $\DIM$ in {\AM} assuming GRH.
The reduction works by first applying a random linear transformation~$A$ to the variety~$V$ such that if $V$ has dimension at least $d$ then with high probability the projection onto the coordinates
$y_1,\ldots,y_d$
of the image $AV$ is Zariski-dense in $\overline{\Q}^d$. The second step of the reduction involves randomly choosing  integers $\alpha_1,\ldots,\alpha_d$, adding equations 
$y_1=\alpha_1,\ldots,y_d =\alpha_d$
to the system defining~$AV$, and verifying the satisfiability of the new system over~$\overline{\Q}$ via the $\HN$ algorithm.

The $\HNZ$ problem has a 
geometric interpretation closed related to the above discussion of $\DIM$.  
Given polynomials $f_1,\ldots,f_k \in \Q[\vec{x}][y_1,\ldots,y_n]$ where $\vec{x} := (x_1,\ldots,x_m)$, let $V$ be the subvariety of $\overline{\Q}^{m+n}$  defined by the system when treating the parameters $x_1,\ldots,x_m$ as variables. Then the satisfiability of the system
 over $\overline{\Q(\vec{x})}$  is equivalent to the projection of $V$
onto its first $m$ coordinates being Zariski-dense in $\overline{\Q}^m$. Indeed, by the weak Hilbert Nullstellensatz for $\overline{\Q(\vec{x})}$, 
the system if satisfiable over 
$\overline{\Q(\vec{x})}$
iff
$1\not\in I$ for 
$I \subseteq \Q(\vec{x})[\vec{y}]$ the ideal
generated by $f_1,\ldots,f_k$.
Since the constants in this setting belong to $\Q(\vec{x})$, the condition corresponds to $I \cap \Q[x_1,\ldots,x_m]$ being zero. By \cite[Section~9.5, Corollary~4 and Proposition~5]{cox2013ideals} this is equivalent to the projection of
$V$ on $x_1,\ldots,x_m$ being dense in $\overline{\Q}^m$.

The above discussion shows that we can
see $\HNZ$ as a midpoint in the reduction 
of $\DIM$ to $\HN$---the first step 
being a randomised polynomial-time
reduction of $\DIM$ to $\HNZ$,
and the second step being a randomised polynomial-time reduction of $\HNZ$ to $\HN$.
However the approach to the $\HNZ$ problem taken in the present paper 
differs from~\cite{koiran-dim} in its correctness proof and error analysis.
Our probability bounds use the Polynomial Identity Lemma (also called the DeMillo-Lipton-Schwartz-Zippel Lemma), in support of which we use a parametric version of the Nullstellensatz and basic facts about ring and field extensions.  By contrast, the approach of~\cite{koiran-dim} relies on real algebraic geometry via an identification of sub-varieties of $\C^n$ with definable sets in $\R^{2n}$.  In particular, it uses a technique of~\cite{Koiran95-volume} for counting integer points in semi-algebraic sets in terms of their volume and number of connected components.

Finally, we mention that HNP can be formulated using the so-called \emph{generic quantifier} $\forall^*$ considered by Koiran~\cite{koiran-dim}, where $\forall^* \boldsymbol x \, \varphi(\boldsymbol x)$ expresses that the that the predicate $\varphi(\boldsymbol x)$ holds on a Zariski dense set.  Then HNP for polynomials 
$f_1,\ldots,f_k \in \Q[\vec{x}][y_1,\ldots,y_n]$ can be written as
$\forall^* \boldsymbol x\, \exists \boldsymbol y \, \bigwedge_{i=1}^k f_i(\boldsymbol x,\boldsymbol y)=0$.

\subsection{Related work}
The $\HN$ problem is known to be polynomial-time equivalent to the \emph{tensor rank} problem; see, e.g.~\cite[Theorem~3]{shitov2016hard}.
A more general problem related to $\HN$ is the \emph{radical membership} problem.  Garg and Saxena~\cite{Garg020} study the complexity of radical membership parameterised by the transcendence degree of the polynomials in the system. 

We study the $\HN$ problem over algebraically closed fields.
The problem of determining whether a family of polynomials has real zero is known to lie in {\PSPACE}~\cite{canny} and has given rise to the complexity class $\exists \R$~\cite{SchaeferS17}.

We work with a formulation of $\HN$ in the Turing model of computation.  The problem is also naturally expressed in the Blum-Shub-Smale model, where it is the canonical complete problem for the analogue of {\NP} in that setting~\cite{burgisser2013completeness}.

\section{Overview}
\label{sec:background}
In this section we give a high-level overview of our main result and its proof. We also introduce some 
notation along the way. 
We recall that {\AM} is the class of languages with an interactive
proof that consist of the verifier sending a   random string, the prover responding with a string,
both  of  polynomial length, 
and where the decision to accept is obtained by applying a deterministic polynomial-time function
to the transcript.~\cite[Section~8.4]{arorabarak}
All formal statements of results and corresponding proofs can be respectively found in \Cref{sec:technical} and 
in the Appendix in the full version~\cite{manssour2024parametricversionhilbertnullstellensatz}.

\subsection{The \texorpdfstring{$\HNZ$}{HNP} problem.}
Let  $\vec{x} := (x_1,\ldots,x_m)$, and recall that $\overline{\Q(\vec{x})}$ denotes the algebraic closure of the  field of rational functions in variables $\vec{x}$ over $\Q$.
We investigate the complexity of determining whether the system of polynomial equations
\begin{equation}
\label{eq:alg_system}
f_1(y_1,\ldots,y_n) = 0,\; \ldots, \; f_k(y_1,\ldots,y_n) = 0
\end{equation}
with $f_i \in \Q(\vec{x})[y_1,\ldots,y_n]$ has a solution in $\overline{\Q(\vec{x})}^n$ . 
We call this the Parametric Hilbert Nullstellensatz problem, or $\HNZ$ for short.  Henceforth we assume that the polynomials $f_i$ have coefficients in $\Z[\vec{x}]$.  
This is without loss of generality since any system with 
coefficients in $\Q(\vec{x})$ can be transformed by scaling to one with coefficients in $\Z[\vec{x}]$ that has the same set of solution over $\overline{\Q(\vec{x})}$.
The conventions for representing 
such polynomials will be detailed in Section~\ref{sec:conventions}.

\subsection{Reduction from \texorpdfstring{$\HNZ$}{HNP} to \texorpdfstring{$\HN$}{HN}}
\label{sec:hn-number-theoretic-overview}
Let $\vec{x} := (x_1, \ldots, x_m)$ and $\vec{y}:= (y_1, \ldots, y_n)$.
Henceforth we fix an instance of $\HNZ$, comprising a collection $\mathcal{S}=\{f_1,\ldots,f_k\} 
\subseteq \Z[\vec{x}][\vec{y}]$ of polynomials for which we seek a common zero in $\overline{\Q(\vec{x})}$.
We study the satisfiability of the system $\mathcal S$ when specialised at a given tuple of values $\vec{\alpha} := (\alpha_1,\ldots,\alpha_m)$ in $\Z^m$. That is, we consider whether the set of polynomials $\mathcal S_{\vec{\alpha}} := \{ f_1(\vec{\alpha},\vec{y}),\ldots,f_k(\vec{\alpha},\vec{y}) \} \subseteq \mathbb{Z}[\vec{y}]$ has a common zero in $\overline{\Q}$.
Our aim is to show that with 
high probability over a random choice of $\vec{\alpha}$ in a suitable range,
the systems~$\mathcal S$ and $\mathcal S_{\vec{\alpha}}$ are equisatisfiable; thereby we obtain a randomised polynomial-time reduction of $\HNZ$ to $\HN$.

\paragraph{Unsatisfiable systems.}
Assume that $\mathcal{S}$ is not satisfiable in $\overline{\Q(\vec{x})}$. The weak version of Hilbert's Nullstellensatz implies that the ideal generated by $f_1,\ldots,f_k$ contains the constant~$1$. In particular, there exists a  non-zero polynomial $a\in \Z[\vec{x}]$ and polynomials $g_1,\ldots,g_k \in \Z[\vec{x}][\vec{y}]$ such that
\[ a = g_1f_1 + \cdots + g_k f_k \, . \]
Observe that if the above equation holds then the specialised system $\mathcal{S}_{\vec{\alpha}}$ can only have a solution if $a(\vec{\alpha})=0$. 
An effective parametric version of Hilbert's Nullstellensatz (see Theorem~\ref{th:parametric-effective-hn-multivar}) gives a degree bound on $a(\vec{x})$. We may thus apply the Polynomial Identity Lemma (see Lemma~\ref{thm:schwartz-zippel})  to deduce a bound on the probability that $a(\vec{\alpha}) = 0$, and hence an upper-bound on the probability that $\mathcal{S}_{\vec{\alpha}}$ admits a solution in $\overline{\Q}$, where $\vec{\alpha}$ is chosen uniformly at random in a fixed range; see \Cref{prop:unsat-spec-bound-multivar}.

\paragraph{Satisfiable systems.}
Assume, conversely, that the system $\mathcal{S}$ is satisfiable.
We first show that there exists a ``small'' solution $\vec{\beta} = (\beta_1,\ldots,\beta_n) \in \overline{\Q(\vec{x})}^n$ of~$\mathcal{S}$. In particular, in  \Cref{prop:sat-small-solution}, we
use results on quantifier elimination for algebraically closed fields to 
obtain a bound on the degree (in $\vec{x}$ and $\vec{y}$) of the defining polynomials of the $\beta_i$'s. 

Let $\K := \Q(\vec{x})(\beta_1,\ldots,\beta_n)$ be the finite extension of~$\Q(\vec{x})$ obtained by adjoining the solution~$\vec{\beta}$ to $\Q(\vec{x})$. Since the extension $\K$ is separable over $\Q(\vec{x})$, by the Primitive Element Theorem there exists  $\theta \in \overline{\Q(\vec{x})}$ such that  $\K = \Q(\vec{x})(\theta)$, i.e.,
every element in $\mathbb K$ can be written as a $\Q(\vec{x})$-linear combination of the powers $1,\theta,\ldots,\theta^{N-1}$, where $N$ is the degree of $\K$ over $\Q(\vec{x})$.
In particular, for all $i\in \{1,\ldots,n\}$ we have
\begin{equation}
    \label{eq:beta-prim-el}
    \beta_i = \frac{P_i(\theta)}{b}  \,  ,
\end{equation}
where $P_i(y) \in \Q[\vec{x}][y]$ and $b \in \Q[\vec{x}]$.

Denote by 
$m_\theta(\vec{x},y) \in \Q[\vec{x}][y]$ the minimal polynomial of $\theta$ over~$\Q(\vec{x})$.
In \Cref{prop:prim-element-multivar} we exhibit an upper bound on the degree of $m_\theta(\vec{x},y)$ in both $y$ and $\vec{x}$.  We use this bound to obtain, in turn, a degree bound on the polynomial 
$b$ in~\eqref{eq:beta-prim-el}.
Specifically, in \Cref{prop:prim-el-poly-bounds-solution}, we show that $b \in \Q[\vec{x}]$ has total degree singly exponential in the representation of the system $\mathcal S$. 

Given $\vec{\alpha} \in \Z^m$, we claim 
that the system~$\mathcal{S}_{\vec{\alpha}}$ remains satisfiable over $\overline{\Q}$ if 
$b(\vec{\alpha}) \neq 0$.
Indeed, if 
$b(\vec{\alpha}) \neq 0$,
we have a ring homomorphism
$\varphi_{\vec{\alpha}} \colon \Q[\vec{x},b(\vec{x})^{-1}][\theta] \to \overline{\Q}$ defined by
\[ \sum_{i = 0}^{N-1} q_j(\vec{x}) \theta^j \mapsto \sum_{j = 0}^{N-1} q_j(\vec{\alpha}) \omega^j \]
where $q_j(\vec{x}) \in \Q[\vec{x},b(\vec{x})^{-1}]$ 
and $\omega$ is a root of $m_\theta(\vec{\alpha},y) \in \overline{\Q}[y]$. 
Note that for $p\in \Z[\vec{x}]$ we have
$\varphi_{\vec{\alpha}}(p) = p(\vec{\alpha})$, i.e., the restriction of $\varphi_{\alpha}$ to $\Z[\vec{x}]$ is just evaluation at $\vec{\alpha}$.  It follows that the image of $\vec{\beta}$
under $\varphi_{\vec{\alpha}}$ 
is a solution of the specialised system
$\mathcal S_{\vec{\alpha}}$.
We can use the Polynomial Identity Lemma to upper bound the probability that $b(\vec{\alpha})=0$ (and hence that $\mathcal S_{\vec{\alpha}}$ is not satisfiable over~$\overline{\Q}$) for $\vec{\alpha}$ chosen uniformly at random from a fixed range.

\paragraph{Reduction to $\HN$.}
Finally, we combine the results on unsatisfiable and satisfiable systems to compute a bound $D$ such that if we choose $\alpha_1,\ldots,\alpha_m$ in $\{1,2,\ldots,D\}$ independently and uniformly at random, with high probability
$\mathcal S$ and $\mathcal S_{\vec{\alpha}}$ are equisatisfiable.
We thus obtain a randomised polynomial-time reduction from $\HNZ$ to $\HN$, placing $\HNZ$ in {\AM} assuming GRH.

\section{Detailed proofs}
\label{sec:technical}
\subsection{Conventions}
\label{sec:conventions}
Let $\vec{x}=(x_1,\ldots,x_m)$
and $\vec{y}=(y_1,\ldots,y_n)$ be tuples of variables.
In this section $\mathcal S=\{f_1,\ldots,f_k\}$ denotes a finite collection of polynomials in 
$\Z[\vec{x}][\vec{y}]$ given in sparse representation, i.e., as lists of monomials with all integer constants in binary.
We define the \emph{size} of $\mathcal S$
to be the maximum of the number $m$ of \emph{parameters}, number $n$ of \emph{variables}, number $k$ of polynomials, and the 
bit length of the integers appearing as coefficients or exponents in the description of the polynomials.
We further assume that the total degree of each polynomial $f_i$ in~$\vec{x}$ and $\vec{y}$ is at most~$2$ and that the integer coefficients are either $0$ or $1$.
 This assumption is without loss of generality since
 there is a straightforward polynomial-time transformation of an arbitrary system into an equisatisfiable system that satisfies our conventions (which proceeds by introducing new variables for subexpressions and using iterated squaring to handle exponentiation);
 see~\cite[Section~1.1]{KOIRAN1996273,KOIRAN-technical}.
In what follows, the size of the systems we consider is thus equal to $\max(m,n,k)$.

Recall from Section~\ref{sec:background} that for the given system $\mathcal{S}$, its specialisation at $\vec{\alpha}\in \Z^m$ is denoted $\mathcal S_{\vec{\alpha}}$.
We proceed to compute explicit bounds on the probability (over the choice of $\vec{\alpha}$ in a suitable range) that $\mathcal S$ and 
$\mathcal S_{\vec{\alpha}}$ are equisatisfiable, based on which we give a randomised 
polynomial-time reduction of $\HNZ$ to $\HN$.

\subsection{Unsatisfiable systems}
\label{sec:hn-number-theoretic-unsat}
The following is an effective version of the weak Hilbert Nullstellensatz for $\overline{\Q(\vec{x})}$ given in~\cite[Theorem~5 and Corollary~4.20]{d2013heights}.

\begin{theorem}[Effective Parametric Hilbert's Nullstellensatz]
	\label{th:parametric-effective-hn-multivar}
	Let $f_1,\ldots,f_k \in \Z[\vec{x}][y_1,\ldots,y_n]$ be a family of $k$ polynomials  of degree at most $2$ in $\vec{x}$ and~$\vec{y}$ that have no common zero in $\overline{\Q(\vec{x})}$.
	Then there exists $a \in \Z[\vec{x}]\setminus \{0\}$ and $g_1,\ldots,g_k \in \Z[\vec{x}][y_1,\ldots,y_n]$ such that
	\begin{equation}
	\label{eq:alg_system_hn}
	a = g_1f_1 + \cdots + g_k f_k,
    \end{equation}
	with
	\begin{itemize}
		\item $\deg_{\vec{y}}(g_if_i) \leq 2^k$,
		\item $\deg_{\vec{x}}(a), \deg_{\vec{x}}(g_if_i) \leq k2^k$.
	\end{itemize}
\end{theorem}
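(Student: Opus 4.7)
The plan is to combine a classical effective Nullstellensatz over the field $\Q(\vec{x})$ (which controls the $\vec{y}$-degrees) with a clearing-of-denominators argument backed by elimination theory (which controls the $\vec{x}$-degrees). The statement is essentially the specialisation of the parametric arithmetic Nullstellensatz of \cite{d2013heights} to systems of bi-degree~$2$, and the task is to verify that the bounds stated there collapse in our setting to $\deg_{\vec{y}}(g_i f_i) \leq 2^k$ and $\deg_{\vec{x}}(a), \deg_{\vec{x}}(g_i f_i) \leq k \cdot 2^k$.

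First I would view the $f_i$ as elements of $\Q(\vec{x})[\vec{y}]$. Since they have no common zero in $\overline{\Q(\vec{x})}^n$, the weak Nullstellensatz over $\Q(\vec{x})$ produces $\tilde g_1, \ldots, \tilde g_k \in \Q(\vec{x})[\vec{y}]$ with $\sum_i \tilde g_i f_i = 1$. Applying Koll\'ar's effective Nullstellensatz over $\Q(\vec{x})$, adapted to $\vec{y}$-degree $\leq 2$, yields such a representation with $\deg_{\vec{y}}(\tilde g_i f_i) \leq 2^{\min(n,k)} \leq 2^k$; when $k > n$ one first replaces the $f_i$ with $n+1$ sufficiently generic $\Q(\vec{x})$-linear combinations, which generate the same ideal without inflating the individual $\vec{y}$-degrees.

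Next, clear denominators: let $a \in \Z[\vec{x}] \setminus \{0\}$ be a common denominator for all coefficients of the $\tilde g_i$ (viewed as $\vec{y}$-polynomials over $\Q(\vec{x})$), and set $g_i := a \tilde g_i \in \Z[\vec{x}][\vec{y}]$. This yields the identity $a = g_1 f_1 + \cdots + g_k f_k$ with the required $\vec{y}$-degree bound. To bound $\deg_{\vec{x}}(a)$ and $\deg_{\vec{x}}(g_i f_i)$, the cleanest setup is a $\Q(\vec{x})$-linear system whose unknowns are the monomial coefficients (in $\vec{y}$) of the $\tilde g_i$ and whose matrix entries are coefficients of the $f_i$, i.e.\ polynomials in $\Z[\vec{x}]$ of degree at most $2$; Cramer's rule then represents $a$ and the $g_i$ as minors of this matrix.

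The main obstacle is the sharp $\vec{x}$-degree bound. A blackbox application of Cramer's rule to the matrix above gives $\deg_{\vec{x}}(a) \leq 2 \cdot k \binom{n+2^k}{n}$, which is doubly exponential in the input size and so far too weak. To recover $k \cdot 2^k$ one must instead exhibit $a$ as an explicit eliminant obtained by iterated elimination of the variables $y_1, \ldots, y_n$ (a $U$-resultant style construction), and then estimate its $\vec{x}$-degree via a multi-homogeneous B\'ezout bound applied to the bi-graded system in $(\vec{x}, \vec{y})$: each of the $k$ equations has bi-degree at most $2$, contributing a multiplicative factor of $2$ (hence $2^k$ in total), and the additional factor of $k$ comes from tracking how the $\vec{x}$-degree propagates when the $k$ equations are combined in the elimination. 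This is exactly the argument carried out in \cite{d2013heights}. Once $\deg_{\vec{x}}(a)$ is controlled in this way, the same family of minors representing the $g_i$ yields the matching bound $\deg_{\vec{x}}(g_i f_i) \leq k \cdot 2^k$, completing the proof.
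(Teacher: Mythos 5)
The paper does not actually prove this theorem: it is imported directly from \cite[Theorem~5 and Corollary~4.20]{d2013heights}, the parametric arithmetic Nullstellensatz of d'Andrea--Krick--Sombra, specialised to systems of bi-degree at most~$2$. Your proposal is consistent with that treatment, because the one genuinely hard step --- the singly-exponential bound $\deg_{\vec{x}}(a),\deg_{\vec{x}}(g_if_i)\le k2^k$ --- is in the end also deferred to ``exactly the argument carried out in \cite{d2013heights}''; so what you have is a citation plus a sanity check of how the bounds specialise, which is all the paper offers as well. To your credit, you correctly diagnose why the naive route fails: applying a Koll\'ar-type bound over $\Q(\vec{x})$ and then clearing denominators via Cramer's rule on the linear system in the $\binom{n+2^k}{n}$ unknown coefficients gives an $\vec{x}$-degree for $a$ that is doubly exponential, and the refined elimination-theoretic/multihomogeneous machinery of the reference is precisely what is needed to avoid this.

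Two caveats if you intended the sketch to be made self-contained. First, Koll\'ar's effective Nullstellensatz as usually stated requires all degrees to be at least~$3$; for degree~$2$ one needs the variants due to Sombra or Jelonek, which change the bound by a harmless constant factor but should be cited correctly. Second, the closing heuristic --- a factor of $2$ per equation from a bi-graded B\'ezout count and ``an additional factor of $k$ from tracking how the $\vec{x}$-degree propagates'' --- is a plausibility argument, not a proof; the actual derivation in \cite{d2013heights} proceeds via degrees and heights of multiprojective resultants rather than an iterated univariate elimination, and the factor of $k$ does not fall out of the construction you describe without substantially more work.
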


We use Theorem~\ref{th:parametric-effective-hn-multivar} to estimate the probability that an unsatisfiable system 
$\mathcal S$ remains unsatisfiable when specialised at $\vec{\alpha} \in \Z^m$.
To this end we will rely on the Polynomial Identity Lemma~\cite{DEMILLO1978193, schwartz, zippel}:

\begin{lemma}[Polynomial Identity Lemma]
\label{thm:schwartz-zippel}
Let $f \in K[x_1,\ldots,x_n]$ be a non-zero polynomial of total degree $d \geq 0$ over an integral domain $K$. Let $S$ be a finite subset of $K$ and let $r_1,\ldots, r_n$ be selected independently and uniformly at random from $S$. Then
\[ \Pr\left(f(r_1,\ldots, r_n) = 0\right) \leq \frac{d}{|S|}.\]
\end{lemma}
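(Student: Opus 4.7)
The plan is to prove the Polynomial Identity Lemma by induction on the number of variables~$n$, in the classical style of Schwartz--Zippel. Although the statement is phrased over a general commutative ring $K$, the natural setting is one in which a non-zero univariate polynomial of degree $d$ has at most $d$ roots (e.g., $K$ an integral domain, which covers all uses of the lemma in this paper, since we apply it over $\Z[\vec{x}]$ and $\Z$). I will proceed under this assumption.

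For the base case $n=1$, a non-zero polynomial $f \in K[x_1]$ of degree $d$ has at most $d$ roots in $K$, so among the elements of $S$ at most $d$ can be zeros; selecting $r_1$ uniformly from $S$ gives $\Pr(f(r_1)=0)\le d/|S|$. For the inductive step, I would single out the variable $x_n$ and write
\[
f(x_1,\ldots,x_n) \;=\; \sum_{i=0}^{k} f_i(x_1,\ldots,x_{n-1})\, x_n^{i},
\]
where $k\le d$ is the largest index such that $f_k \not\equiv 0$. Since $f_k$ is a non-zero polynomial in $n-1$ variables of total degree at most $d-k$, the inductive hypothesis yields
\[
\Pr\bigl(f_k(r_1,\ldots,r_{n-1})=0\bigr) \;\le\; \frac{d-k}{|S|}.
\]
Conditional on $f_k(r_1,\ldots,r_{n-1}) \neq 0$, the polynomial $h(x_n) := f(r_1,\ldots,r_{n-1},x_n) \in K[x_n]$ is non-zero of degree exactly $k$, so by the base case $\Pr(h(r_n)=0 \mid f_k(r_1,\ldots,r_{n-1})\ne 0) \le k/|S|$.

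Combining the two cases by the union bound, and using independence of $r_n$ from $r_1,\ldots,r_{n-1}$, I obtain
\[
\Pr\bigl(f(r_1,\ldots,r_n)=0\bigr) \;\le\; \frac{d-k}{|S|} + \frac{k}{|S|} \;=\; \frac{d}{|S|},
\]
completing the induction. The only genuine subtlety is the one already noted: the reduction of $h$ to ``at most $k$ roots'' relies on $K$ being an integral domain (or at least on the elements of $S$ having pairwise differences that are not zero divisors), rather than a general commutative ring. This is not a real obstacle in context, since every application in the sequel takes place over an integral domain; I would simply flag this assumption at the start of the proof.
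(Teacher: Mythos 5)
Your proof is the standard Schwartz--Zippel induction and it is correct; note that the paper does not prove this lemma at all but simply cites it to DeMillo--Lipton, Schwartz and Zippel, so there is no in-paper argument to compare against. The one substantive point you raise is a good one and worth keeping: as literally stated, with $K$ an arbitrary commutative ring, the lemma is false. For instance, over $K=\Z/8\Z$ the degree-one polynomial $f(x)=4x$ vanishes at $0,2,4,6$, so with $S=K$ one gets $\Pr(f(r)=0)=1/2>1/8$. The correct hypothesis is that $K$ is an integral domain (or, more generally, that the pairwise differences of elements of $S$ are non-zero-divisors, which is what the base case and the ``degree exactly $k$'' step of your induction actually use). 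Since every application in the paper takes $K=\Z$ or $K=\Z[\vec{x}]$, both integral domains, nothing downstream is affected, but your flag identifies a genuine imprecision in the statement rather than a gap in your own argument.
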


Recall that we use~$s$ to denote the size of our fixed system $\mathcal S$.  Then we have:

\begin{proposition}
\label{prop:unsat-spec-bound-multivar}
    Let $D \in \N$ be such that $D \geq s2^s$.
	If $\mathcal{S}$ has no solution in $\overline{\Q(\vec{x})}$, then for $\alpha_1,\ldots,\alpha_m$ chosen independently and uniformly at random from $\{1,2,\ldots,D\}$
 \[ \Pr\left(\mathcal{S}_{\vec{\alpha}} \text{ is satisfiable in } \overline{\Q}\right) \leq \frac{s2^s}{D}. \]
\end{proposition}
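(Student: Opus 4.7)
The plan is to combine the Effective Parametric Hilbert's Nullstellensatz (Theorem~\ref{th:parametric-effective-hn-multivar}) with the Polynomial Identity Lemma (Lemma~\ref{thm:schwartz-zippel}) via the following observation: the witness polynomial $a(\vec{x})$ produced by the Nullstellensatz forces any ``bad'' specialisation $\vec{\alpha}$ -- that is, one at which $\mathcal{S}_{\vec{\alpha}}$ nevertheless becomes satisfiable over $\overline{\Q}$ -- to be one of its roots. A degree bound on $a$ therefore translates, via the Polynomial Identity Lemma, into the advertised probability bound.

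First, since $\mathcal{S}$ has no solution in $\overline{\Q(\vec{x})}$ and (by the conventions of Section~\ref{sec:conventions}) each $f_i$ has total degree at most $2$ in $\vec{x}$ and $\vec{y}$ with $k \le s$, Theorem~\ref{th:parametric-effective-hn-multivar} produces a nonzero polynomial $a \in \Z[\vec{x}]$ and polynomials $g_1, \ldots, g_k \in \Z[\vec{x}][\vec{y}]$ satisfying
\[
  a(\vec{x}) \;=\; \sum_{i=1}^k g_i(\vec{x},\vec{y})\, f_i(\vec{x},\vec{y}),
\]
with $\deg_{\vec{x}}(a) \le k\, 2^k \le s\, 2^s$. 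Now fix any $\vec{\alpha} \in \Z^m$ such that $\mathcal{S}_{\vec{\alpha}}$ admits a common root $\vec{\beta} \in \overline{\Q}^n$. Substituting $(\vec{x},\vec{y}) = (\vec{\alpha},\vec{\beta})$ into this identity gives
\[
  a(\vec{\alpha}) \;=\; \sum_{i=1}^k g_i(\vec{\alpha},\vec{\beta}) \cdot f_i(\vec{\alpha},\vec{\beta}) \;=\; 0,
\]
so the event that $\mathcal{S}_{\vec{\alpha}}$ is satisfiable over $\overline{\Q}$ is contained in the event $\{a(\vec{\alpha}) = 0\}$.

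To finish, I would apply Lemma~\ref{thm:schwartz-zippel} to the nonzero polynomial $a$, whose total degree is at most $s\, 2^s$, with the sample set $S = \{1, 2, \ldots, D\}$. Since $D \ge s\, 2^s$, the lemma yields $\Pr(a(\vec{\alpha}) = 0) \le s\, 2^s / D$, and combining with the containment of events above gives exactly the claimed bound. The proof is essentially routine once the two cited tools are in hand; the only content is the substitution argument and the bookkeeping that $k \le s$ converts the effective Nullstellensatz bound $k\, 2^k$ into the form $s\, 2^s$ that appears in the statement. I do not anticipate any serious obstacle.
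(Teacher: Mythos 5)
Your proposal is correct and follows essentially the same route as the paper: invoke Theorem~\ref{th:parametric-effective-hn-multivar} to obtain the nonzero witness $a \in \Z[\vec{x}]$ of degree at most $s2^s$, observe that satisfiability of $\mathcal{S}_{\vec{\alpha}}$ forces $a(\vec{\alpha}) = 0$, and conclude with the Polynomial Identity Lemma. Your write-up actually spells out the substitution step (plugging a common root $\vec{\beta}$ into the Nullstellensatz identity) more explicitly than the paper does.
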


\begin{proof}
	If $\mathcal{S}$ has no solution in $\overline{\Q(\vec{x})}^n$ then by the weak Hilbert Nullstellensatz, the ideal generated by the polynomials in $\mathcal{S}$ contains the constant $1$. Furthermore, \Cref{th:parametric-effective-hn-multivar} asserts that there exists $a \in \Q[\vec{{x}}]$ of degree at most $s2^s$ that can be computed as a $\Z[\vec{x}]$-linear combination of the polynomials in $\mathcal{S}$, i.e., Equation~\eqref{eq:alg_system_hn} holds. For all $\vec{\alpha}$ such that $a(\vec{\alpha}) \neq 0$, the specialised system~$\mathcal{S}_{\vec{\alpha}}$  cannot be satisfiable in $\overline{\Q}$.
	That is, by the union bound
	 $$\Pr\left(\mathcal{S}_{\vec{\alpha}} \text{ is satisfiable in } \overline{\Q}\right) \leq \Pr\left(a(\alpha_1,\ldots,\alpha_r) = 0\right).$$  But $a$ has degree at most $s2^s$, and hence by \Cref{thm:schwartz-zippel} the bound follows.
\end{proof}

\subsection{Satisfiable systems}
\label{sec:hn-number-theoretic-sat}

\subsubsection{Degree bound on solutions}
\label{sec:sat-small-sol}

In \cite[Theorem~7]{KOIRAN-technical}, Koiran showed that a system of polynomials with integer coefficients
that has a common zero has one such whose
degree is bounded by a certain explicit 
function of the size of the system.
His proof relies on quantifier-elimination bounds for $\overline{\Q}$. We follow similar reasoning to obtain an upper bound on the degree of a solution to a satisfiable polynomial systems with coefficients in $\Z[\vec{x}]$. For the full proof, which is a close adaptation of \cite[Theorem~7]{KOIRAN-technical}; see \Cref{app:smallsol}.

\begin{restatable}{proposition}{smallsol}
 \label{prop:sat-small-solution}
	Let $\mathcal{P}$ be a system of $k$ polynomial equations in $n$ variables $\vec{y} = (y_1,\ldots,y_n)$ with coefficients in~$\Z[\vec{x}]$ of degree at most $2$ in $\vec{x}$ and $\vec{y}$.
	There exists an effective constant $c\in\N$ such that if $\mathcal{P}$ has a solution over $\overline{\Q(\vec{x})}$, then there exists a solution $\vec{\beta} = (\beta_1,\ldots,\beta_n)$ such that each component $\beta_i$ is a root of a polynomial of degree at most $2^{(n\log k)^{c}}$ with coefficients that are polynomials in $\Z[\vec{x}]$ of degree at most $2^{((n+m)\log k)^{c}}$ in $\vec{x}$.   
\end{restatable}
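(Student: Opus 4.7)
The plan is to adapt the proof of \cite[Theorem~7]{KOIRAN-technical} to the parametric setting, replacing the base field $\Q$ with $\Q(\vec{x})$ and tracking the degrees of solution components as polynomials in both $\vec{x}$ and $\vec{y}$. The central tool is an effective quantifier elimination theorem for algebraically closed fields (e.g.\ Heintz's theorem), applied here over the algebraically closed field $\overline{\Q(\vec{x})}$.

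First, I would build the solution $\vec{\beta}$ one coordinate at a time. Let $V_0 \subseteq \overline{\Q(\vec{x})}^n$ denote the common zero set of the polynomials in $\mathcal{P}$, which is non-empty by hypothesis. For $j = 1,\ldots,n$, assume inductively that $\beta_1,\ldots,\beta_{j-1} \in \overline{\Q(\vec{x})}$ have been chosen so that $V_{j-1} := V_0 \cap \{y_i = \beta_i : i < j\}$ is non-empty, and consider the projection $\pi_j(V_{j-1}) \subseteq \overline{\Q(\vec{x})}$ of $V_{j-1}$ onto its $y_j$-coordinate. Since $\overline{\Q(\vec{x})}$ is algebraically closed, the non-empty constructible set $\pi_j(V_{j-1})$ is either cofinite, in which case I set $\beta_j = 0$, or finite, in which case I let $\beta_j$ be any root of a defining polynomial $p_j(z) \in \Q(\vec{x})[z]$ of $\pi_j(V_{j-1})$. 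In either case $V_j$ remains non-empty.

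The second step is to bound the degrees of $p_j(z)$. Membership in $\pi_j(V_{j-1})$ is expressed by the existential formula
\[ \exists y_{j+1},\ldots,y_n \colon \bigwedge_i f_i(\beta_1,\ldots,\beta_{j-1},z,y_{j+1},\ldots,y_n) = 0 \]
over $\overline{\Q(\vec{x})}$. Applying Heintz's effective quantifier elimination to this formula, viewed as a formula with $O(n+m)$ free or bound variables, $O(k)$ polynomial constraints, and bounded initial degree, yields an equivalent quantifier-free formula whose polynomials have degree of the form $2^{N^{O(1)}}$ in $z$ and in $\vec{x}$, where $N = n+m$. Clearing denominators and absorbing $\log k$ factors from the dependence on the system size into a large enough constant $c$ gives the claimed bounds $\deg_z p_j \leq 2^{(n\log k)^c}$ and $\deg_{\vec{x}}(\text{coeffs of } p_j) \leq 2^{((n+m)\log k)^c}$.

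The main obstacle is controlling the blow-up in system complexity through the iteration. Each previously chosen $\beta_i$ is itself an algebraic element whose minimal polynomial already has substantial degree in $\vec{x}$, so naively substituting the $\beta_i$'s into the system would cause the degree and coefficient bounds to compound uncontrollably across the $n$ iterations. Following Koiran's approach in \cite{KOIRAN-technical}, I would avoid explicit substitution by encoding each $\beta_i$ via its minimal polynomial as an auxiliary existentially quantified variable, so that the total number of quantified variables entering Heintz's theorem stays $O(n+m)$ and the degrees of the polynomial constraints stay under control throughout. This bookkeeping is precisely what converts the generic Heintz bound into the refined bounds stated in the proposition.
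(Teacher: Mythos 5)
There is a genuine gap in the iteration scheme. Your plan substitutes previously chosen \emph{algebraic} elements $\beta_1,\ldots,\beta_{j-1}$ (or their minimal polynomials, via auxiliary existential variables) back into the system before the next application of quantifier elimination. But the minimal polynomial of each $\beta_i$ already has degree $2^{(n\log k)^{O(1)}}$, so at the next round the input degree $d$ fed into the Heintz-type bound is itself singly exponential; since the output degree depends on $(\log kd)^{O(1)}$, the exponent's constant grows at every one of the $n$ rounds, and after $n$ iterations you no longer get a fixed effective constant $c$ independent of $n$. You name this compounding as ``the main obstacle'' but the proposed bookkeeping does not resolve it --- it is exactly the step where the bound breaks. (A secondary error: in the cofinite case you set $\beta_j = 0$, but $0$ may be one of the finitely many excluded points; you must choose a value certified to lie in the set.)

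The paper's proof avoids algebraic substitution altogether, and this is the missing idea. It splits into two cases. If the solution set $S$ is \emph{finite}, a single application of quantifier elimination to the original degree-$2$ system describes every projection $S_i$ by polynomials with the stated bounds, and since each $S_i$ is then finite, every coordinate of any solution is a root of one of those polynomials --- no iteration at all. If $S$ is \emph{infinite}, some projection $S_i$ is cofinite with complement of size less than $2^{(n\log k)^{c}}$, so one can pick an \emph{integer} $m\in S_i$ in that range; substituting $y_i=m$ yields again a system of degree at most $2$ with coefficients in $\Z[\vec{x}]$, so the induction on $n$ proceeds with the hypotheses of the proposition intact and no degree blow-up. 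Reworking your argument along these lines (integers only in the infinite case, all coordinates at once in the finite case) is what actually delivers a uniform constant $c$.
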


\subsubsection{Primitive element}
\label{sec:prim-el}

Assume that the system~$\mathcal{S}$ is satisfiable. Let $\vec{\beta} = (\beta_1,\ldots,\beta_n)$ be a solution of~$\mathcal{S}$  in $\overline{\Q(\vec{x})}^n$ satisfying the bounds stated in Proposition~\ref{prop:sat-small-solution}
and write $\K := \Q(\vec{x})(\beta_1,\ldots,\beta_n)$.
Denote by $\mathcal O_{\mathbb K}$ the subring of $\mathbb K$ comprising those elements that are \emph{integral} over $\Q[\vec{x}]$, that is, whose minimal polynomial over $\Q(\vec{x})$ is monic with coefficients in $\Q[\vec{x}]$.

Recall that we write $s$ to denote the size of the system $\mathcal S$. 
In the following proposition 
we compute a polynomial $d \in \Q[\vec{x}]$ such that $\tilde{\beta}_i := d\beta_i \in \mathcal{O}_{\K}$ for all  $i \in \{1,\ldots,n\}$ and then follow 
 a standard construction of a primitive element $\theta \in \mathcal{O}_{\K}$ as a $\Z$-linear combination of the $\tilde{\beta}_1,\ldots,\tilde{\beta}_n$. 
 (See \Cref{app:primel} for details.)

\begin{restatable}{proposition}{primel}
    \label{prop:prim-element-multivar}
	There exists a primitive element $\theta$ of $\K$ with monic minimal polynomial~$m_\theta(\vec{x},y) \in \Q[\vec{x}][y]$ of degree at most $2^{(s\log s)^{c}}$ in $y$, and with coefficients of degree at most $2^{(s\log s)^{c}}$  in $\vec{x}$, where $c$ is an effective constant.
\end{restatable}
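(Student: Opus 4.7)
The plan is to follow the three-step outline preceding the proposition: first normalise the coordinates of the small solution from Proposition~\ref{prop:sat-small-solution} to elements $\tilde{\beta}_i$ that are integral over $\Q[\vec{x}]$; next apply a constructive version of the primitive element theorem to obtain $\theta$ as a $\Z$-linear combination of the $\tilde{\beta}_i$; and finally bound $\deg_y m_\theta$ and $\deg_{\vec{x}} m_\theta$ by tracking the construction. Throughout, I write $c_0$ for the effective constant of Proposition~\ref{prop:sat-small-solution} and reserve $c$ for the effective constant claimed in the present statement; the final step is to absorb polynomial overheads by choosing $c$ sufficiently large relative to $c_0$.

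Concretely, for each $i$ fix a polynomial $p_i \in \Z[\vec{x}][y]$ annihilating $\beta_i$, with $y$-degree $d_i \le 2^{(n\log k)^{c_0}}$ and coefficients of $\vec{x}$-degree at most $2^{((n+m)\log k)^{c_0}}$. Let $a_i(\vec{x})$ be the leading coefficient of $p_i$ in $y$ and set $d := \prod_{i=1}^n a_i \in \Q[\vec{x}]$. The classical trick of substituting $y = z/a_i$ in $p_i$ and multiplying through by $a_i^{d_i-1}$ produces an explicit monic polynomial witnessing that $a_i \beta_i$, and therefore $\tilde{\beta}_i := d\beta_i$, is integral over $\Q[\vec{x}]$. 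A direct computation then writes down a monic annihilator $\tilde{p}_i(\vec{x},z) \in \Z[\vec{x}][z]$ for $\tilde{\beta}_i$ with $z$-degree at most $d_i$ and $\vec{x}$-degree still singly exponential in $s$.

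Since $\Q(\vec{x})$ is infinite and $\K/\Q(\vec{x})$ is separable (characteristic zero), the inductive proof of the primitive element theorem yields integer coefficients $c_1, \ldots, c_n$, each of magnitude polynomial in $[\K:\Q(\vec{x})]$, such that $\theta := \sum_i c_i \tilde{\beta}_i$ generates $\K$ over $\Q(\vec{x})$. As a $\Z$-linear combination of elements of $\mathcal{O}_\K$, the element $\theta$ itself lies in $\mathcal{O}_\K$, and hence its monic minimal polynomial $m_\theta(\vec{x},y)$ lies in $\Q[\vec{x}][y]$ (because $\Q[\vec{x}]$ is integrally closed). Its $y$-degree equals $[\K:\Q(\vec{x})] \le \prod_i d_i \le 2^{n(n\log k)^{c_0}}$, which fits comfortably within $2^{(s\log s)^c}$ once $c$ is chosen large enough, using $s \ge \max(m,n,k)$.

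The main obstacle is bounding $\deg_{\vec{x}} m_\theta$. My plan is to compute $m_\theta$ as the characteristic polynomial of the multiplication-by-$\theta$ endomorphism $\mu_\theta$ of $\K$, viewed as a $\Q(\vec{x})$-vector space with the explicit basis of reduced monomials $\prod_i \tilde{\beta}_i^{e_i}$, where each $e_i$ is less than the degree of $\tilde{\beta}_i$ over $\Q(\vec{x})(\tilde{\beta}_1, \ldots, \tilde{\beta}_{i-1})$. Each entry of $\mu_\theta$ in this basis is obtained by multiplying a basis monomial by $\theta$ and iteratively reducing modulo the monic polynomials $\tilde{p}_i$; monicity keeps the reductions inside $\Q[\vec{x}]$, and an induction on the (polynomially many) reduction steps shows that the $\vec{x}$-degree of every matrix entry remains singly exponential in $s$. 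The coefficients of the characteristic polynomial are polynomial expressions in these entries of total algebraic degree at most $[\K:\Q(\vec{x})]$, so their $\vec{x}$-degrees stay within the same single-exponential regime; choosing $c$ sufficiently large absorbs the cumulative exponents into the stated bound $2^{(s\log s)^c}$. An equivalent route, which yields the same estimate, is to eliminate $\tilde{\beta}_1, \ldots, \tilde{\beta}_n$ successively from the system $\{\tilde{p}_i(\vec{x}, \tilde{\beta}_i) = 0\}_i \cup \{y - \sum_i c_i \tilde{\beta}_i = 0\}$ by resultants, invoking the standard bilinear degree bound for the resultant at each step.
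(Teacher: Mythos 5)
Your proposal matches the paper's proof in its first two steps (clearing denominators to get integral $\tilde{\beta}_i$, then building $\theta$ as a $\Z$-linear combination via the constructive Primitive Element Theorem, with $m_\theta$ monic over $\Q[\vec{x}]$ because $\Q[\vec{x}]$ is integrally closed), but diverges in the step that is actually the crux: bounding $\deg_{\vec{x}}$ of the coefficients of $m_\theta$. The paper does not compute anything explicitly there; instead it adjoins the single equation $z-\sum_i c_i d\, y_i=0$ to $\mathcal{S}$, renormalises the augmented system to degree $2$, and re-invokes \Cref{prop:sat-small-solution} (i.e.\ the quantifier-elimination bound of \Cref{th:elim-multivar}) to read off the $\vec{x}$-degree of a defining polynomial of $\theta$. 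Your route --- the characteristic polynomial of $\mu_\theta$, or equivalently iterated resultants with the bilinear degree bound --- is more elementary and self-contained, and it has the advantage of bounding the defining polynomial of \emph{the specific} $\theta$ constructed, whereas \Cref{prop:sat-small-solution} is an existence statement about \emph{some} small solution of the augmented system, so the paper's reuse of it requires a little care. The price is that you must track degrees through $n$ nested eliminations yourself; this works (degrees in $y$ multiply, degrees in $\vec{x}$ grow by a factor of the accumulated $y$-degree at each step, and $D^{O(n)}$ is still absorbed into $2^{(s\log s)^c}$ since $n\le s$), but it is the part of your write-up that is only sketched.

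One technical slip to repair in the characteristic-polynomial variant: the basis of reduced monomials $\prod_i\tilde{\beta}_i^{e_i}$ with $e_i$ below the \emph{relative} degree of $\tilde{\beta}_i$ over $\Q(\vec{x})(\tilde{\beta}_1,\ldots,\tilde{\beta}_{i-1})$ is not compatible with reduction modulo the $\tilde{p}_i$, whose degrees are the \emph{absolute} degrees $d_i$ and whose coefficients are what you control; the relative minimal polynomials live over the intermediate fields and carry no a priori $\vec{x}$-degree bounds. The clean fix is to work in the algebra $A:=\Q(\vec{x})[z_1,\ldots,z_n]/(\tilde{p}_1,\ldots,\tilde{p}_n)$ with the full monomial basis $0\le e_i<d_i$: the characteristic polynomial $\chi$ of multiplication by $\sum_i c_i z_i$ on $A$ is monic with coefficients in $\Q[\vec{x}]$ of controlled degree, $\chi(\theta)=0$ since $A$ surjects onto $\K$, and $m_\theta$ is then a monic factor of $\chi$ in $\Q[\vec{x}][y]$, whose $\vec{x}$-degree is bounded by $\deg_{\vec{x}}\chi+\deg_y\chi$ via total degrees. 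Your resultant alternative sidesteps this issue entirely and is the version I would keep.
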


We next show how to recover representations of 
the components $\beta_i$ of the solution of $\mathcal S$ in terms of the primitive element $\theta$.

\begin{proposition}
\label{prop:prim-el-poly-bounds-solution}
    There exists $b \in \Z[\vec{x}]$ of degree at most $2^{(s\log s)^c}$ in $\vec{x}$ and $P_1,\ldots,P_n \in \Z[\vec{x}][y]$
    such that for all $i \in \{1,\ldots,n\}$, $\beta_i = \frac{P_i(\theta)}{b}$, where $c$ is an effective constant.
\end{proposition}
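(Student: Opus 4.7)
The plan is to express each $\beta_i$ with common denominator $b := d \cdot \Disc(m_\theta)$, where $d$ is the polynomial (already bounded via \Cref{prop:prim-element-multivar}) used to clear denominators so that $\tilde\beta_i := d\beta_i \in \O_\K$. The key algebraic ingredient is the classical inclusion
\[
\Disc(m_\theta) \cdot \O_\K \;\subseteq\; \Q[\vec{x}][\theta],
\]
valid whenever the base ring is integrally closed and the extension is generated by an integral primitive element with monic minimal polynomial. This applies in our setting: $\Q[\vec{x}]$ is a UFD and hence integrally closed; $\K/\Q(\vec{x})$ is separable (characteristic zero); $\theta \in \O_\K$ by construction, being a $\Z$-linear combination of the $\tilde\beta_i \in \O_\K$; and $m_\theta \in \Q[\vec{x}][y]$ is monic by \Cref{prop:prim-element-multivar}.

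Applying the inclusion to $\tilde\beta_i = d\beta_i \in \O_\K$, we obtain $\tilde P_i \in \Q[\vec{x}][y]$ of degree less than $N := \deg_y m_\theta$ with $\Disc(m_\theta) \cdot d\beta_i = \tilde P_i(\theta)$, so that
\[
\beta_i \;=\; \frac{\tilde P_i(\theta)}{d \cdot \Disc(m_\theta)}.
\]
Multiplying numerator and denominator by a single positive integer that clears all rational denominators appearing in $\tilde P_i$ and in $d \cdot \Disc(m_\theta)$ produces $P_i \in \Z[\vec{x}][y]$ and $b \in \Z[\vec{x}]$ with $\beta_i = P_i(\theta)/b$; this last step does not touch $\vec{x}$-degrees since we only multiply by an integer scalar.

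For the degree bound, recall that the discriminant of a polynomial of $y$-degree $N$ is a polynomial of degree $2N-2$ in its coefficients (e.g., as the appropriate Sylvester determinant of $m_\theta$ and $m_\theta'$). By \Cref{prop:prim-element-multivar}, $N \leq 2^{(s \log s)^c}$ and each coefficient of $m_\theta$ has $\vec{x}$-degree at most $2^{(s \log s)^c}$, whence $\deg_{\vec{x}} \Disc(m_\theta) \leq (2N-2)\cdot 2^{(s \log s)^c}$, again of the form $2^{(s \log s)^{c_1}}$. Taking $d$ as the product over $i$ of the leading $y$-coefficients of the minimal polynomials of the $\beta_i$ produced by \Cref{prop:sat-small-solution} yields $\deg_{\vec{x}} d \leq n \cdot 2^{((n+m) \log k)^c}$, also of the right form, and adding these exponents gives the claimed bound on $\deg_{\vec{x}} b$. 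The only real obstacle is the degree accounting through the discriminant expansion together with the bookkeeping needed to pass from $\Q[\vec{x}]$ to $\Z[\vec{x}]$; the conceptual content is entirely carried by the classical discriminant-integral-closure inclusion cited above.
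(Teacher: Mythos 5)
Your proposal is correct and follows essentially the same route as the paper: the inclusion $\mathrm{disc}(m_\theta)\cdot\mathcal{O}_{\K}\subseteq\sum_{i=0}^{N-1}\Q[\vec{x}]\theta^i$ is exactly the paper's \Cref{prop:integral-denominator}, and the choice $b = d\cdot\mathrm{disc}(m_\theta)$ with the same degree accounting is what the paper does. The only (immaterial) deviations are your $2N-2$ versus the paper's $2N-1$ for the degree of the discriminant in the coefficients, and your construction of $d$ from leading coefficients rather than from denominators of the monic minimal polynomials; both yield the same single-exponential bound.
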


\begin{proof}
By \Cref{prop:integral-denominator} in \Cref{app:denominators-disc} we have 
$\mathcal O_{\mathbb K} \subseteq \frac{1}{\mathrm{disc}(m_\theta)} \sum_{i=0}^{N-1} \Q[\vec{x}]\theta^i$, where
$\mathrm{disc}(m_\theta)  \in \Q[\vec{x}]$ is the discriminant of the minimal polynomial $m_\theta$ of $\theta$.  In general,
the discriminant of a polynomial of degree $N$ is a polynomial in its coefficients of total degree $2N-1$.
Thus, by
Proposition~\ref{prop:sat-small-solution}, $\mathrm{disc}(m_\theta)$ is a polynomial
in $\Q[\vec{x}]$ of total degree at most $(2N - 1)\cdot 2^{(s\log
s)^{c}} \leq 2^{(s\log s)^{c'}}$, where $c' > c$ is again an effective
absolute constant.  

Recall that
$\tilde{\beta}_i = d\beta_i \in \mathcal{O}_{\K}$ for all $i \in \{1,\ldots,n\}$ where $d
\in \Q[\vec{x}]$ of total degree at most $2^{(s\log s)^{c}}$ (see \Cref{app:primel}). We can thus write $\beta_i =
\frac{P_i(\theta)}{b}$, where $b := d\cdot \mathrm{disc}(m_\theta)$ is of total degree at
most $2^{(s\log s)^{c''}}$ for an effective absolute constant $c''$.
\end{proof}

\subsubsection{Specialisation}

We now show that if the 
denominator~$b$
 from \Cref{prop:prim-el-poly-bounds-solution} does not vanish at  $\vec{\alpha} \in \Z^m$ then the system $\mathcal{S}_\vec{\alpha}$ is satisfiable in $\overline{\Q}$. The degree bound 
  on $b$
 in combination with the Polynomial Identity Lemma allows us to compute a bound on the probability of $\mathcal{S}_{\vec{\alpha}}$ admitting a solution over $\overline{\Q}$ when $\vec{\alpha}$ is  chosen uniformly at random in a fixed range.

\begin{proposition}
\label{prop:sat-spec-bound-multivar}
    There exists an effectively computable constant $c$ such that for all $D \in \N$ with $D \geq 2^{(s\log s)^c}$, if $\mathcal{S}$ has a solution in $\overline{\Q(\vec{x})}$, then for $\alpha_1,\ldots,\alpha_m$ chosen independently and uniformly at random from $\{1,2,\ldots,D\}$
    \[ \Pr\left(\mathcal{S}_{\vec{\alpha}} \text{ is satisfiable in } \overline{\Q}\right) \geq 1 - \frac{2^{(s\log s)^c}}{D}. \]
\end{proposition}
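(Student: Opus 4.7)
The plan is to assemble the ingredients from the preceding subsections into a direct specialisation argument and then apply the Polynomial Identity Lemma to the denominator $b$. I would begin by invoking Proposition~\ref{prop:prim-el-poly-bounds-solution} to fix a primitive element $\theta$ of $\K = \Q(\vec{x})(\beta_1,\ldots,\beta_n)$ with monic minimal polynomial $m_\theta(\vec{x},y)\in\Q[\vec{x}][y]$, polynomials $P_1,\ldots,P_n\in\Z[\vec{x}][y]$, and a denominator $b\in\Z[\vec{x}]$ of total degree at most $2^{(s\log s)^c}$ such that $\beta_i = P_i(\theta)/b$ in $\K$ for every $i$. The constant $c$ here is the effective constant provided by Proposition~\ref{prop:prim-el-poly-bounds-solution}, which we will reuse (possibly enlarged) in the final bound.

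The crux of the argument is the claim that whenever $b(\vec{\alpha})\neq 0$, the specialised system $\mathcal S_{\vec{\alpha}}$ admits a common zero in $\overline{\Q}$. To prove this, I would construct the ring homomorphism $\varphi_{\vec{\alpha}}$ sketched in Section~\ref{sec:hn-number-theoretic-overview}. Since $m_\theta$ is monic in $y$ (by Proposition~\ref{prop:prim-element-multivar}), the polynomial $m_\theta(\vec{\alpha},y)\in\overline{\Q}[y]$ remains monic of degree $N = \deg_y m_\theta$, so it has a root $\omega\in\overline{\Q}$. Using the isomorphism $\Z[\vec{x}][\theta] \cong \Z[\vec{x}][y]/(m_\theta(\vec{x},y))$ (monicity of $m_\theta$ guarantees this presentation), the assignment $\vec{x}\mapsto\vec{\alpha}$, $y\mapsto\omega$ factors through the quotient because $m_\theta(\vec{\alpha},\omega)=0$, defining a ring homomorphism $\Z[\vec{x}][\theta]\to\overline{\Q}$. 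Inverting $b$ is then permitted by sending $b^{-1}\mapsto b(\vec{\alpha})^{-1}\in\overline{\Q}$, yielding $\varphi_{\vec{\alpha}}\colon\Z[\vec{x},b^{-1}][\theta]\to\overline{\Q}$. The elements $\beta_i = P_i(\theta)/b$ lie in this ring and satisfy $f_j(\beta_1,\ldots,\beta_n)=0$ in $\K$; applying $\varphi_{\vec{\alpha}}$ produces a tuple $(P_1(\omega)/b(\vec{\alpha}),\ldots,P_n(\omega)/b(\vec{\alpha}))\in\overline{\Q}^n$ that is a common zero of $\mathcal S_{\vec{\alpha}}$.

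The final step is a direct application of the Polynomial Identity Lemma (Lemma~\ref{thm:schwartz-zippel}) to $b$: for $\vec{\alpha}$ chosen uniformly in $\{1,\ldots,D\}^m$, we have $\Pr(b(\vec{\alpha})=0) \le \deg(b)/D \le 2^{(s\log s)^c}/D$, so the probability that $\mathcal S_{\vec{\alpha}}$ is satisfiable over $\overline{\Q}$ is at least $1 - 2^{(s\log s)^c}/D$. After possibly enlarging the constant $c$ so that the same value serves in both the degree bound and the hypothesis $D\geq 2^{(s\log s)^c}$, the claim follows.

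The only genuinely delicate point—though hardly an obstacle—is verifying that $\varphi_{\vec{\alpha}}$ is well-defined as a ring homomorphism, which rests entirely on the monicity of $m_\theta$ in $y$ ensuring that degree and leading coefficient are preserved under $\vec{x}\mapsto\vec{\alpha}$. Everything else is a routine combination of the previously established degree bounds with the Polynomial Identity Lemma.
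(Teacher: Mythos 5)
Your proof is correct and follows essentially the same route as the paper: express the solution as $\beta_i = P_i(\theta)/b$ via Proposition~\ref{prop:prim-el-poly-bounds-solution}, show that $b(\vec{\alpha})\neq 0$ forces $\mathcal S_{\vec{\alpha}}$ to be satisfiable (the paper does this by observing that the univariate polynomials $b^{2} f_i(P_1(y)/b,\ldots,P_n(y)/b)$ are multiples of $m_\theta$ and then specialising, which is exactly the specialisation homomorphism $\varphi_{\vec{\alpha}}$ you make explicit), and finish by applying the Polynomial Identity Lemma to $b$. The only slip is notational: $m_\theta$ has coefficients in $\Q[\vec{x}]$ rather than $\Z[\vec{x}]$, so your presentation $\Z[\vec{x}][y]/(m_\theta)$ should be taken over $\Q[\vec{x}]$, as in the paper's overview where the homomorphism is defined on $\Q[\vec{x},b(\vec{x})^{-1}][\theta]$.
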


\begin{proof}
Suppose $\mathcal{S}$ has a solution $\vec{\beta} = (\beta_1,\ldots,\beta_n)$ in $\overline{\Q(\vec{x})}^n$.
Recall we set $\theta$ to be a primitive element for $\K = \Q(\vec{x})(\beta_1,\ldots,\beta_n)$ that is integral over $\Q[\vec{x}]$ and denote by $m(\vec{x}, y) \in \Q[\vec{x}][y]$ its minimal polynomial.
 By \Cref{prop:prim-el-poly-bounds-solution}, there exist polynomials $P_1, \ldots, P_n \in \Z[\vec{x}][y]$ and $b \in \Z[\vec{x}]$ such that for all $\beta_i = \frac{P_i(\theta)}{b}$. Furthermore, the degree of $b$ is at most $2^{(s\log s)^{c}}$ where $c$ is an effective constant.

We first show that given $\vec{\alpha} \in \overline{\Q}^m$, the system $\mathcal{S}_{\vec{\alpha}}$ is satisfiable over $\overline{\Q}$ 
if $b$ does not
vanish on~$\vec{\alpha}$. 
For all $i \in \{1,\ldots,k\}$, let
    \begin{equation}
        \label{eq:univar-system}
        g_i(y) = b^{2} f_i\left(\frac{P_1(y)}{b},\ldots, \frac{P_n(y)}{b}\right).
    \end{equation}
   As discussed in the beginning of \Cref{sec:background}, we assume the degrees of the $f_i$'s in $y$ are at most $2$, thus it must be that $g_i \in \Z[\vec{x}]$. Furthermore, the polynomials $g_i$ must be multiples of $m$ since $m$ is irreducible and $g_i(\theta) = 0$.

    If $b$ does not 
    vanish on $\vec{\alpha}$, \eqref{eq:univar-system} must also hold in $\overline{\Q}$ when specialised at $\vec{x} = \vec{\alpha}$. This implies that if $\omega$ is a solution of $m(\vec{\alpha}, y)$ in $\overline{\Q}$, then 
    $\left(\frac{P_1(\vec{\alpha}, \omega)}{b(\vec{\alpha})},\ldots, \frac{P_n(\vec{\alpha}, \omega)}{b(\vec{\alpha})}\right)$ 
    is a solution of~$\mathcal{S}_{\vec{\alpha}}$ in $\overline{\Q}$. Note that such a solution $\omega$ of  $m(\vec{\alpha}, y) \in \Q[y]$  exists for all $\vec{\alpha}\in\overline{\Q}^m$ as the polynomial $m$ is monic in~$y$, hence $m(\vec{\alpha},y)$ is not constant.
    
Thus
\[ \Pr\left(\mathcal{S}_{\vec{\alpha}} \text{ is satisfiable in } \overline{\Q}\right) \geq \Pr\left(b(\vec{\alpha}) \neq 0\right) = 1 - \Pr\left(b(\vec{\alpha}) = 0\right),  \]
which by the Polynomial Identity Lemma is at least $1 - \frac{2^{(s\log s)^c}}{D}$.
\end{proof}

\subsection{Reduction to \texorpdfstring{$\HN$}{HN}.}
\label{sec:hn-number-theoretic-reduction}
We have just given estimates on the probability that the system $\mathcal{S}$ specialised at an  integer point $\vec{\alpha}$ randomly chosen in a fixed range, is satisfiable over $\overline{\Q}$. We will now use them to exhibit a randomised polynomial-time reduction of $\HNZ$ to $\HN$. This allows us to use Koiran's {\AM} protocol for $\HN$ to decide $\HNZ$ in {\AM} assuming GRH as well.

Recall that without loss of generality we may assume {\AM} protocols satisfy perfect completeness~\cite[Section~3]{FurerGMSZ89}. 
That is,
given a system of polynomial equations in $\Z[y_1,\ldots,y_n]$ that is satisfiable in $\overline{\Q}$, we suppose the {\AM} protocol for $\HN$ outputs ``satisfiable'' with probability~$1$, whereas given a system that is not satisfiable over $\overline{\Q}$, the probability of outputting ``satisfiable'' is at most $1/2$. We note that by standard amplification techniques, the error bound for false positives can be improved to $1/2^\ell$ for a fixed $\ell \in \N$.

\thmmain*

\begin{proof}
Let $D := 3\cdot 2^{(s\log s)^{c}}$ where $c$ is the effective constant from \Cref{prop:sat-spec-bound-multivar}.
We reduce our fixed instance of $\HNZ$ to an instance of $\HN$ by choosing the values $\alpha_1,\ldots,\alpha_m$ independently and uniformly at random from the set $\{1,2,\ldots,D\}$, and running Koiran's algorithm from~\cite{KOIRAN-technical} on $\mathcal{S}_\vec{\alpha}$. 
Let us now analyse the probabilistic correctness of our reduction.

Denote by $\mathcal{A}$ the event that Koiran's algorithm outputs ``satisfiable'', and by $\mathcal{A}^c$ its complement, i.e., the event that the algorithm outputs ``unsatisfiable''. Let $\mathcal{B}$ be the event that for $\alpha_1,\ldots,\alpha_m$ chosen uniformly at random from $\{1,2,\ldots,D\}$, the system $\mathcal{S}_{\vec{\alpha}}$ is satisfiable in $\overline{\Q}$, and let $\mathcal{B}^c$ be the complement of $\mathcal{B}$, i.e., the event that $\mathcal{S}_{\vec{\alpha}}$ is not satisfiable in $\overline{\Q}$.

Suppose that the system $\mathcal{S}$ is not satisfiable over $\overline{\Q(\vec{x})}$.
By \Cref{prop:unsat-spec-bound-multivar}, if we choose $\alpha_1,\ldots,\alpha_m$ uniformly at random in $\{1,2,\ldots, D\}$ the probability that $\mathcal{S}_\alpha$ is satisfiable over~$\overline{\Q}$ (the event $\mathcal{B}$) is at most $\frac{s2^s}{D}$. That is, since $D \geq 4s 2^s$, we have $\Pr\left(\mathcal{B}\right) \leq \frac{1}{4}$.

We recall that Koiran's algorithm admits perfect completeness, that is, $\Pr\left(\mathcal{A}^c\mid \mathcal{B}\right) = 0$. By amplifying the error probability for unsatisfiable instances and repeating his algorithm $\ell = 4$ times, we further have $\Pr\left(\mathcal{A} \mid \mathcal{B}^c\right) \leq \frac{1}{16}$.

We can now bound from below the probability of the algorithm outputting ``unsatisfiable'' as follows.
\begin{align*}
    \Pr\left(\mathcal{A}^c\right) 
    &= \Pr\left(\mathcal{B}\right)\cdot \Pr\left(\mathcal{A}^c \mid \mathcal{B}\right) + \Pr\left(\mathcal{B}^c\right)\cdot \Pr\left(\mathcal{A}^c \mid \mathcal{B}^c\right) \\
    &= \Pr\left(\mathcal{B}^c\right)\cdot \Pr\left(\mathcal{A}^c \mid \mathcal{B}^c\right) \\
    &\geq \Big(1 - \frac{1}{4} \Big) \cdot \Big(1 - \frac{1}{16} \Big)  = \frac{45}{64} \geq \frac{2}{3} \; ,
\end{align*}
which is the desired probability of correctness for unsatisfiable instances.	
	
Let us now suppose the system $\mathcal{S}$ is satisfiable over $\overline{\Q(\vec{x})}$.
By \Cref{prop:sat-spec-bound-multivar}, the probability that $\mathcal{S}_{\vec{\alpha}}$ is satisfiable over $\overline{\Q}$ when $\alpha_1,\ldots,\alpha_m$ are chosen uniformly at random in $\{1,2,\ldots,D\}$ (the event $\mathcal{B}$) is at least $\frac{2^{(s\log s)^{c}}}{D}$. 
In particular, our choice of $D = 3\cdot 2^{(s\log s)^{c}}$, implies that $\Pr\left(\mathcal{B}\right) \geq \frac{2}{3}$.

Recall that Koiran's algorithm admits perfect completeness, hence $\Pr\left(\mathcal{A}\mid \mathcal{B}\right) = 1$.

We can bound the probability of the algorithm outputting ``satisfiable'' as follows.
\begin{align*}
    \Pr\left(\mathcal{A}\right) 
     &= \Pr\left(\mathcal{C}\right)\cdot \Pr\left(\mathcal{B} \mid \mathcal{C}\right) + \Pr\left(\mathcal{C}^c\right)\cdot \Pr\left(\mathcal{B} \mid \mathcal{C}^c\right) \\
    &\geq \Pr\left(\mathcal{B}\right)\cdot\Pr\left(\mathcal{A} \mid \mathcal{B}\right) \\
    &\geq \frac{2}{3}\cdot 1 = \frac{2}{3}
\end{align*}
We have again obtained the required probability of correctness for satisfiable instances, which completes our randomised polynomial-time reduction.
Since {\AM} is closed under probabilistic reductions~\cite[Section~8.4.2]{arorabarak},
this proves the theorem.

\end{proof}

\paragraph{Acknowledgements}
Nikhil Balaji, Klara Nosan and Mahsa Shirmohammadi are supported by International
Emerging Actions grant (IEA’22). Rida Ait El Manssour and Mahsa Shirmohammadi are supported by
ANR grant VeSyAM (ANR-22-CE48-0005). James Worrell was supported by UKRI Frontier Research Grant EP/X033813/1.

\bibliographystyle{plain}
\bibliography{literature}

\appendix
\section{Proof of Proposition~\ref{prop:sat-small-solution}}
\label{app:smallsol}

We work with the first-order theory of algebraically closed fields of characteristic zero. Let $K$ be a field, and denote by~$\overline{K}$ its algebraic closure. We consider the first-order language~$\mathcal{L}$ with constant symbols for all elements of $K$, function symbols $+, -, \cdot$, and the relation symbol $=$. Atomic formulas have the form $P(x_1,\ldots,x_n) = 0$, where $P \in K[x_1,\ldots,x_n]$.  We say that a formula~$\Phi$ is \emph{built over a set of polynomials $\mathcal{P}$} if every polynomial mentioned in~$\Phi$ lies in~$\mathcal{P}$.  It is well-known that the theory of algebraically
closed fields admits quantifier elimination. We use the
following quantitative formulation of quantifier elimination over $\overline{\Q(\vec{x})}$, which is a specialisation of~\cite[Theorem~2]{FITCHAS19901}, and its reformulation stated in~\cite[Theorem~6]{KOIRAN1996273}, to~$\overline{\Q(\vec{x})}$.

\begin{theorem}
\label{th:elim-multivar}
	Let $\mathcal{P}$ be a set of $k$ polynomials each of degree at most~$d$.
	Fix $Y=(y_1,\ldots,y_{k_1})$ and $Z=(z_1,\ldots,z_{k_2})$ to be tuples of first-order variables.
	Consider the formula 
	\[ \Phi(Y):= \exists Z \, \Psi(Y,Z) \, , \]
    where $\Psi(Y,Z)$ is a quantifier-free formula built over $\mathcal{P}$.
	There exists an equivalent quantifier-free formula $\Phi'(Y)$ that is built
	over a set of polynomials $\mathcal{Q}$ with degree bounded by 
	$2^{n^{O(1)}(\log kd)^{O(1)}}$.
	 The number of polynomials in $\mathcal{Q}$ is
	$O\big((kd)^{n^{O(1)}}\big)$.
	 
	 Moreover, when the coefficients of the polynomials in $\Psi$ are elements of $\Z[\vec{x}]$, and the combined degree of the polynomials in $\vec{x}, Y$ and $Z$ is bounded by $d$, then the coefficients of the polynomials in $\Phi'$ are elements of $\Z[\vec{x}]$ of degree at most
	$2^{(n+r)^{O(1)}(\log kd)^{O(1)}}$
	in~$\vec{x}$.
\end{theorem}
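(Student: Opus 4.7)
The plan is to derive Theorem~\ref{th:elim-multivar} as a specialisation of~\cite[Theorem~2]{FITCHAS19901}, which gives an effective quantifier-elimination bound over a general algebraically closed field of characteristic zero. The first half of the statement---the bounds on the degrees and number of polynomials appearing in the quantifier-free equivalent $\Phi'(Y)$---follows by direct application of the cited result to the base field $\overline{\Q(\vec{x})}$. This immediately supplies the claimed bounds $2^{n^{O(1)}(\log kd)^{O(1)}}$ on the degree and $O((kd)^{n^{O(1)}})$ on the number of polynomials.

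For the second half, where we must bound the degrees in $\vec{x}$ of the $\Z[\vec{x}]$-valued coefficients of polynomials appearing in $\Phi'$, the natural move is to promote the parameters $\vec{x} = (x_1,\ldots,x_r)$ from coefficients in the base field to additional free variables. Concretely, I would consider the enlarged formula
\[\widetilde\Phi(\vec{x},Y) \;:=\; \exists Z\, \widetilde\Psi(\vec{x},Y,Z),\]
where $\widetilde\Psi$ has the same atomic predicates $P(\vec{x},Y,Z)=0$ as $\Psi$, now viewed as polynomials in $\Z[\vec{x},Y,Z]$ of combined degree at most~$d$. Applying~\cite[Theorem~2]{FITCHAS19901} to $\widetilde\Phi$ over $\overline{\Q}$ with $n+r$ total free variables yields a quantifier-free equivalent whose polynomials have combined degree in $(\vec{x},Y)$ bounded by $2^{(n+r)^{O(1)}(\log kd)^{O(1)}}$. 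Grouping each such polynomial by powers of $Y$ and reading the degree of its $\vec{x}$-coefficients then recovers the claimed bound.

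The main point requiring care will be verifying that this enlargement preserves the original formula: $\Phi(Y)$ is evaluated over $\overline{\Q(\vec{x})}$ with $\vec{x}$ viewed as elements of the base field, whereas $\widetilde\Phi(\vec{x},Y)$ is evaluated over $\overline{\Q}$ with $\vec{x}$ free. Because the $x_i$ are algebraically independent over $\Q$ and quantifier-free formulas over algebraically closed fields of characteristic zero are preserved under field extensions in both directions, a quantifier-free equivalent of $\widetilde\Phi(\vec{x},Y)$ produced over $\overline{\Q}$ yields a quantifier-free equivalent of $\Phi(Y)$ over $\overline{\Q(\vec{x})}$ upon reinterpreting $\vec{x}$ as a tuple of transcendental elements. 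Beyond this bookkeeping, the argument is a direct specialisation of the cited result, with the implicit constants in the $O(\cdot)$ notation inherited from the complexity analysis of the resultant-based elimination procedure therein.
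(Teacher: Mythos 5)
The paper gives no proof of this theorem: it is stated as a direct specialisation of \cite[Theorem~2]{FITCHAS19901} (in the reformulation of \cite[Theorem~6]{KOIRAN1996273}) to the base field $\overline{\Q(\vec{x})}$, where the ``Moreover'' clause is the analogue, for coefficients in $\Z[\vec{x}]$, of Koiran's bit-size bounds for integer coefficients --- i.e.\ it is obtained by tracking how the elimination procedure (which computes output coefficients as polynomial expressions of controlled degree in the input coefficients) acts on the base ring, with ``degree in $\vec{x}$'' playing the role of ``bit size''. Your route to the ``Moreover'' clause is genuinely different and, I think, sound: you promote $x_1,\ldots,x_r$ to free variables, run quantifier elimination over $\overline{\Q}$ in $n+r$ free variables, and transfer the resulting equivalence back. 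The transfer step is cleanest if phrased via completeness of $\mathrm{ACF}_0$ (or uniformity of quantifier elimination): the universal closure of $\widetilde\Phi \leftrightarrow \widetilde\Phi'$ is a sentence true in $\overline{\Q}$, hence true in $\overline{\Q(\vec{x})}$, and one then instantiates the promoted variables at the transcendentals; note that algebraic independence of the $x_i$ plays no role in this direction, contrary to what your justification suggests. What your approach buys is that it avoids reopening the elimination algorithm; what it costs is a uniform weakening of the bounds to $2^{(n+r)^{O(1)}(\log kd)^{O(1)}}$ for the $Y$-degrees and $O\bigl((kd)^{(n+r)^{O(1)}}\bigr)$ for the number of polynomials in $\mathcal{Q}$, rather than the exponents in $n$ alone claimed in the first half of the statement (a single formula $\Phi'$ produced by your method need not satisfy both the $n$-exponent bound in $Y$ and the $(n+r)$-exponent bound in $\vec{x}$ simultaneously). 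Since $r=m\le s$ and $n\le s$, every downstream use in the paper (Proposition~\ref{prop:sat-small-solution} onward) only needs bounds of the form $2^{(s\log s)^{O(1)}}$, so this loss is immaterial here, but you should state the weaker version of the theorem that your argument actually delivers.
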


We now use the above result to show that polynomial systems with coefficients in $\Z[\vec{x}]$, when satisfiable, admit a bounded-degree algebraic solution.
We closely follow the proof \cite[Theorem~7]{KOIRAN-technical}, adapting it to $\overline{\Q(\vec{x})}$.

\smallsol*

\begin{proof}
We first prove the claim for the case when the system $\mathcal{P}$ has a finite number of solutions.
Let $S \subseteq \overline{\Q(\vec{x})}^n$ be the solution set of $\mathcal{P}$ and suppose it is finite. Write $S_i \subseteq \overline{\Q(\vec{x})}$ for the projection of~$S$ on the $i$th coordinate axis. By \Cref{th:elim-multivar}, there exists an effective constant $c\in\N$ such that~$S_i$ can be defined by a quantifier-free formula built over a set of polynomials $\mathcal{Q}$ having degree bounded by $2^{(n\log kd)^{c}}$ and coefficients  that are polynomials in $\Z[\vec{x}]$ of degree at most $2^{((n+m)\log kd)^{c}}$ in~$\vec{x}$.  If~$S$ is finite, then each $S_i$ is finite. Hence each element of $S_i$ is a root of some polynomial in~$\mathcal{Q}$.
Now given a solution $\vec{\beta} \in S$, its components are contained in $S_1,\ldots,S_n$, and thus are roots of polynomials in~$\mathcal{Q}$. 

Let us now assume that $\mathcal{P}$ has infinitely many solutions. We prove the claim by induction on~$n$.
Let $S \subseteq \overline{\Q(\vec{x})}^n$ be the solution set of $\mathcal{P}$. By the assumption on $\mathcal{P}$, at least one of the projections $S_i$ must be infinite.
Now $S_i$ is a constructible set, and its algebraic closure $\overline{S_i}$ is a whole line, hence its complement $\overline{\Q(\vec{x})} \setminus S_i$ must be finite. 
Furthermore, by \Cref{th:elim-multivar} the elements of $\overline{\Q(\vec{x})} \setminus S_i$ are chosen among the roots of $O\big(k^{n^{O(1)}}\big)$ polynomials of degree at most $2^{n^{O(1)}(\log kd)^{O(1)}}$.
Hence $|\overline{\Q(\vec{x})} \setminus S_i| < 2^{(n \log k)^{c}}$ for some effective constant $c \in\N$. This implies that there exists an integer $m \in S_i$ with $0 \leq m \leq 2^{(n \log k)^{c}}$. By substituting $y_i$ with $m$ in $\mathcal{P}$, we obtain a new satisfiable system in $n-1$ variables where the polynomials are of combined degree at most $2$ in $\vec{x}$ and $\vec{y}\setminus \{y_i\}$. By the induction hypothesis, the result follows.
\end{proof}

\section{Proof of Proposition~\ref{prop:prim-element-multivar}}
\label{app:primel}

Recall that an algebraic field extension $K/L$ is said to be \emph{separable} if for every $\alpha\in K$, the minimal polynomial of $\alpha$ over $L$ is a separable polynomial. All algebraic extensions of characteristic zero are separable. Separable extensions admit the following property.
\begin{theorem}[Primitive Element Theorem]
\label{th:primitive_element}
Let $K/L$ be a separable extension of finite degree. Then $K = L(\theta)$ for some $\theta \in K$; that is, the extension is simple and $\theta$ is a primitive element. 
\end{theorem}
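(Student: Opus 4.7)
The plan is to prove the Primitive Element Theorem by a classical two-case argument, splitting on whether the base field $L$ is finite or infinite, and then reducing the infinite case to the question of adjoining two elements at a time.

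First I would dispose of the case where $L$ is finite. If $L$ is finite then so is $K$ (being a finite-degree extension), so the multiplicative group $K^\times$ is cyclic; any generator $\theta$ of $K^\times$ satisfies $K = L(\theta)$. Next, for $L$ infinite, the extension $K/L$ is finitely generated as a field (being finite-dimensional as a vector space), so $K = L(\alpha_1,\ldots,\alpha_n)$ for some $\alpha_i \in K$. By induction on $n$, it suffices to handle $n=2$: given $\alpha,\beta \in K$ with $K=L(\alpha,\beta)$, I will exhibit $\theta \in K$ with $L(\alpha,\beta)=L(\theta)$.

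For the two-element reduction, let $f,g \in L[X]$ be the minimal polynomials of $\alpha,\beta$ over $L$. Fix an algebraic closure $\overline{L}$ containing $K$, and denote the roots of $f$ by $\alpha=\alpha_1,\alpha_2,\ldots,\alpha_m$ and those of $g$ by $\beta=\beta_1,\beta_2,\ldots,\beta_r$; by separability these roots are distinct within each list. The candidate primitive element is $\theta := \alpha + c\beta$ for a suitably chosen $c\in L$. Set $h(X) := f(\theta - cX) \in L(\theta)[X]$; then $h(\beta)=f(\alpha)=0$, while $h(\beta_j)=0$ for $j\neq 1$ would force $\theta - c\beta_j = \alpha_i$ for some $i$, i.e.\ $c = (\alpha_i - \alpha)/(\beta - \beta_j)$. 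This is a finite list of forbidden values in $\overline{L}$, and because $L$ is infinite we may choose $c \in L$ avoiding them. Then in $L(\theta)[X]$ the polynomials $h$ and $g$ share exactly the root $\beta$, both with multiplicity one (since $g$ is separable), so their $\gcd$ is $X-\beta$. Hence $\beta \in L(\theta)$, and then $\alpha = \theta - c\beta \in L(\theta)$, giving $L(\alpha,\beta)\subseteq L(\theta)$; the reverse inclusion is immediate from the definition of $\theta$.

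The main obstacle is really just the linear-algebra bookkeeping in the $\gcd$ step: one has to be careful that computing $\gcd(h,g)$ in the polynomial ring $L(\theta)[X]$ yields the same divisor as in $\overline{L}[X]$, which follows from the fact that the gcd is preserved under field extension. The separability hypothesis is used exactly here to guarantee that the common root $\beta$ is simple in $g$, so that the gcd is linear rather than a higher power; in positive characteristic without separability the argument can fail, but in our setting of characteristic zero (or more generally of separable extensions) the conclusion holds.
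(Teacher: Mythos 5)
Your proof is correct and coincides with the classical constructive argument the paper invokes: the paper does not prove this theorem in-text but cites it and sketches exactly your construction ($\theta$ a linear combination of two generators at a time, with only finitely many bad coefficients $c$, iterated by induction to handle $K=L(\alpha_1,\ldots,\alpha_n)$). The only cosmetic difference is your finite-field case, which is vacuous in the paper's setting since $L\supseteq \Q(\vec{x})$ is infinite of characteristic zero; note also that the paper later needs the quantitative refinement (\Cref{lem:effective-primitive-element}) bounding how large an integer $c$ must be tried, which your ``finitely many forbidden ratios'' count yields once the excluded values are counted in terms of $\deg f$ and $\deg g$.
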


The proof of the Primitive Element Theorem is constructive (see, for example, \cite[Theorem~4.1.8]{cohen2013course} or \cite[Theorem 5.1]{milneGalTheory}), and computes the primitive element~$\theta$ as a linear combination of the generators of the finite extension. That is, if $K := L(\alpha_1,\ldots,\alpha_k)$, then $\theta = \sum_{i=1}^k c_i \alpha_i$. The computation of $\theta$ is done inductively, constructing first a primitive element $\theta_2$ for $L(\alpha_1,\alpha_2)$, then $\theta_3$ for $L(\alpha_1, \alpha_2,\alpha_3)$, and so on until $\theta$ is obtained.  Furthermore, one can show that only finitely many combinations of the constants~$c_i$ fail to generate a primitive element for the field extension~$K$. In particular, if $L$ is an extension of $\Q$, the constants can be chosen in $\Z$, as summarised in the following lemma (which is a generalisation of \cite[Proposition 6.6]{kururPhd}).

\begin{lemma}
\label{lem:effective-primitive-element}
    Let $\alpha, \beta$ be algebraic elements of $\overline{\Q(\vec{x})}$ of respective degrees $\ell$ and~$m$ over $\Q(\vec{x})$. There exists an integer $c \in \{1,\ldots, \ell^2m^2 + 1\}$ such that $\alpha + c\beta$ is a primitive element for~$\Q(\vec{x})(\alpha,\beta)$.
\end{lemma}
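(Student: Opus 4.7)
The plan is to adapt the classical proof of the primitive element theorem (as in \cite{milneGalTheory}) and track the bound on $c$ explicitly. Working in characteristic zero over $L:=\Q(\vec{x})$, the extension $L(\alpha,\beta)/L$ is automatically separable, so I may take the full sets of conjugates $\alpha=\alpha_1,\ldots,\alpha_\ell$ and $\beta=\beta_1,\ldots,\beta_m$ of $\alpha$ and $\beta$ in $\overline{L}$ (all pairwise distinct within each set). I will show the following sufficient condition: if $c\in\Z$ is such that the $\ell m$ elements $\alpha_i+c\beta_j$ are pairwise distinct in $\overline{L}$, then $\theta:=\alpha+c\beta$ is a primitive element for $L(\alpha,\beta)$.

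To establish this sufficient condition I would use the standard resultant/$\gcd$ argument. Let $f,g\in L[T]$ be the minimal polynomials of $\alpha$ and $\beta$, and consider $h(T):=f(\theta-cT)\in L(\theta)[T]$. Then $h(\beta)=f(\alpha)=0$ and $g(\beta)=0$, so $\beta$ is a common root of $h$ and $g$. Any other common root would have the form $\beta_j$ with $\theta-c\beta_j=\alpha_i$ for some $i$, i.e.\ $\alpha_i+c\beta_j=\alpha_1+c\beta_1$, which the distinctness hypothesis forbids unless $(i,j)=(1,1)$. Since $g$ is separable, $\gcd(h,g)=T-\beta$ in $\overline{L}[T]$, and because $\gcd$'s computed in $L(\theta)[T]$ agree with those in $\overline{L}[T]$ up to units, I conclude $\beta\in L(\theta)$, hence $\alpha=\theta-c\beta\in L(\theta)$, so $L(\theta)=L(\alpha,\beta)$.

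It remains to count the integers $c$ that fail the distinctness hypothesis. A collision $\alpha_i+c\beta_j=\alpha_k+c\beta_l$ with $(i,j)\neq(k,l)$ forces $\beta_j\neq\beta_l$ (otherwise separability gives $\alpha_i=\alpha_k$ and hence $(i,j)=(k,l)$), so it determines $c$ uniquely as
\[
c=\frac{\alpha_k-\alpha_i}{\beta_j-\beta_l}.
\]
The number of ordered quadruples $(i,k,j,l)\in[\ell]^2\times[m]^2$ with $j\neq l$ is $\ell^2 m(m-1)\leq \ell^2m^2$, giving at most that many bad values. Since the set $\{1,\ldots,\ell^2m^2+1\}$ has more elements than there are bad values, pigeonhole yields an admissible integer $c$ in this range.

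The argument is essentially routine; the only subtlety I anticipate is being careful that separability is genuinely needed (to conclude $j=l\Rightarrow i=k$ and that $\gcd(h,g)$ has $\beta$ as a simple root), which is automatic here since $\mathrm{char}\,\Q(\vec{x})=0$. The stated bound $\ell^2m^2+1$ is slightly loose—$\ell(m-1)+1$ would already suffice via the weaker condition $\alpha_i+c\beta_j\neq\alpha+c\beta$ for $(i,j)\neq(1,1)$—but the bookkeeping above matches the form in which the lemma is used downstream.
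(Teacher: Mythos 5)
Your proof is correct and follows the standard constructive argument for the Primitive Element Theorem that the paper itself invokes by citation (the lemma is stated in the paper without an explicit proof, as a generalisation of a cited result); your collision count of at most $\ell^2m(m-1)\leq \ell^2m^2$ bad values of $c$ exactly accounts for the stated range $\{1,\ldots,\ell^2m^2+1\}$. The gcd argument and the observation that separability (automatic in characteristic zero) is what makes $\beta$ a simple root of $\gcd(h,g)$ are both handled correctly.
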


As noted in \Cref{sec:hn-number-theoretic-sat},  given an element $\alpha \in K$, we can always find a polynomial $d\in\Q[\vec{x}]$ such that $\alpha d = \beta \in \mathcal{O}_K$.
 Indeed, let $f(y) \in \Q(\vec{x})[y]$ be the minimal (monic) polynomial of~$\alpha$ and choose $d$ to be the least common multiple of the denominators of the coefficients of $f(y)$. 
Then, since $f$ is monic,
\[ d^{\mathrm{deg} f} f\left(\frac{y}{d}\right) = g(y), \]
and $g(y) \in \Q[\vec{x}][y]$ is monic, with $\alpha d$ as a root. Thus $\alpha d \in \mathcal{O}_K$. Furthermore, $\alpha d$ and $\alpha$ have the same degree over $\Q(\vec{x})$.

Recall we write $s$ to denote the size of our fixed system.  By \Cref{prop:sat-small-solution}, the generators $\beta_i$ of the field $\K$ are algebraic over $\Q(\vec{x})$ of degree at most~$2^{(s\log s)^{c'}}$ for an effective constant $c' \in \N$.
By the argument above, for each $\beta_i$, there must exist a polynomial~$d_i \in \Q[\vec{x}]$ of degree at most $2^{(s\log s)^{c'}}$ in $\vec{x}$ such that $d_i\beta_i \in \mathcal{O}_{\K}$. Let $d = \lcm_{i=1}^{n} d_i$ and observe that its total degree is again bounded by $2^{(s\log s)^{c''}}$  for an effective constant $c'' > c' $.
Fix $\tilde{\beta}_i := d\beta_i \in \mathcal{O}_{\K}$ for all $i \in \{1,\ldots,n\}$ .
We obtain a primitive element $\theta \in \mathcal{O}_{\K}$ as a $\Z$-linear combination of the generators $\tilde{\beta}_1,\ldots,\tilde{\beta}_n$.

\primel*

\begin{proof}
	Denote by $D := 2^{(s\log s)^{c'}}$ the bound on the degrees of the $\beta_i$'s (and hence the $\tilde{\beta}_i$'s) given by \Cref{prop:sat-small-solution}.
	We follow the proof of the Primitive Element Theorem and construct the primitive element as
	$\theta = \sum_{i=1}^{n} c_i\tilde{\beta}_i$ where the $c_i \in \Z$ are of magnitude at most $D^{2n}+1$, and $\theta$ is of degree at most $D^n$.
	We prove the bound on the constants~$c_i$ by induction on $n$.
 
For $n = 2$, we construct a primitive element~$\theta_2 = \tilde{\beta}_1 + c_2\tilde{\beta}_2$ for the field $\Q(\vec{x})(\tilde{\beta}_1,\tilde{\beta_2)}$. By \Cref{lem:effective-primitive-element}, we can choose $c_2$ of magnitude at most $d^4 + 1$. To obtain the bound on the degree of $\theta_2$, denote by $p(y), q(y) \in \Q[\vec{x}](y)$ the respective minimal polynomials of $\tilde{\beta}_1$ and $\tilde{\beta}_2$ over $\Q(\vec{x})$. That is, we have $p(\tilde{\beta}_1) = q(\tilde{\beta}_2) = 0$. Notice that the polynomials $p(\theta_2 - cy)$ and $q(y)$ have a common root $\tilde{\beta}_2$.  Now recall that the resultant of two polynomials is a polynomial expression of their coefficients that is equal to zero if and only if the polynomials have a common root. That is, the resultant of $p(\theta_2 - cy)$ and $q(y)$ is a polynomial expression in $\theta_2$ equal to zero. Since the resultant is of degree at most~${D}^2$, $\theta_2$ must be of degree at most $D^2$ over $\Q(\vec{x})$.

Now let us assume that the bounds hold for $n - 1$, that is, the primitive element~$\theta_{n-1}$ of the field $\Q(\vec{x})(\tilde{\beta}_1,\ldots,\tilde{\beta}_{n-1})$ can be constructed as $\theta_{n-1} = \sum_{i=1}^{n-1} c_i\tilde{\beta}_i$ with $c_i$ integers of magnitude at most ${D}^{2(n-1)} + 1$, and that it is algebraic of degree at most ${D}^{n-1}$ over~$\Q(\vec{x})$.
By \Cref{lem:effective-primitive-element}, we can construct the primitive element of the field $\K = \Q(\vec{x})(\theta_{n-1},\tilde{\beta}_n)$ as a linear combination $\theta_{n-1} + c_n\tilde{\beta}_n$ where $c_n$ is an integer of magnitude at most $({D}^{n-1})^2D^2 + 1 = D^{2n} + 1$. The bound on the degree of $\theta$ again follows from the resultant argument applied to the minimal polynomials of $\theta_{n-1}$ and $\tilde{\beta}_n$ over $\Q(\vec{x})$.

	We have thus shown that the minimal polynomial $m_\theta(\vec{x},y)$ of $\theta$ is of degree at most $D^{2n}\leq 2^{(s\log s)^c}$ where $c$ is an effective constant.
	In remains to prove the bound on the coefficients of~$m_\theta$. 
	To this aim, 
	we construct a system of $k+1$ polynomial equations in $n+1$ variables with coefficients in $\Z[\vec{x}]$ such that one of its solutions will be $(\beta_1,\ldots,\beta_n,\theta)$.
	
	For all $i \in \{1,\ldots,k\}$, let $g_i(y_1,\ldots,y_n,z) = f_i(y_1,\ldots,y_n)$, where $f_i$ is as in $\mathcal{S}$.
	Define $$g_{k+1}(y_1,\ldots,y_n,z) := z - \sum_{i = 1}^{n} c_i d  y_i,$$
	where $d \in \Q[\vec{x}]$ is the polynomial such that $\tilde{\beta}_i = d \beta_i$ for all $i\in \{1,\ldots,n\}$.
	 Then the system
	\begin{equation}
		\label{eq:alg_system-prim-el}
		g_1(y_1,\ldots,y_n,z) = 0, \; \ldots, \; g_{k+1}(y_1,\ldots,y_n,z) = 0
	\end{equation}
 	is satisfiable and admits the claimed solution.
 	Furthermore, notice that the combined degree of the polynomial $g_{k+1}$ in $\vec{x}$, $\vec{y}$ and $z$ is at most $2^{(s\log s)^{c''}}$, where $c''$ is an effective constant. By using the same reasoning as in the beginning of \Cref{sec:background}, we can transform the system in~\eqref{eq:alg_system-prim-el} into a system of polynomials of total degree at most~$2$, of size polynomial in $s$.
 	We may now apply \Cref{prop:sat-small-solution} to this new system to assert that the coefficients of $m_\theta(\vec{x},y)$ are of degree at most $2^{(s\log s)^c}$ in $\vec{x}$. 
 	Moreover, since $\theta \in \mathcal{O}_{\K}$ by construction, its minimal polynomial is monic.
\end{proof}

\section{Structure of the ring of integers}
\label{app:denominators-disc}

\begin{proposition}
\label{prop:integral-denominator}
Let $\mathbb K=\Q(\vec{x})(\theta)$
where $\theta \in \mathcal O_{\mathbb K}$ has minimal polynomial
$m_\theta \in \Q[\vec{x}][y]$ over
$\Q(\vec{x})$ of degree $N$.  Then 
$\mathcal O_{\mathbb K} \subseteq \frac{1}{\mathrm{disc}(m_\theta)} \sum_{i=0}^{N-1} \Q[\vec{x}]\theta^i$.
\end{proposition}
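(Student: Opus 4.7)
The plan is to adapt the classical number-theoretic argument that bounds the ring of integers of a number field by the inverse of the discriminant of a power basis, replacing the base ring $\Z$ by $\Q[\vec{x}]$ throughout. The key fact that makes this work is that $\Q[\vec{x}]$ is a unique factorisation domain, hence integrally closed in its field of fractions $\Q(\vec{x})$. Consequently, the trace (over $\Q(\vec{x})$) of any element of $\mathcal O_{\mathbb K}$ lies in $\Q[\vec{x}]$: the trace of an integral element is, up to sign, a coefficient of a monic polynomial in $\Q[\vec{x}][y]$ that annihilates it, so it is an element of $\Q(\vec{x})$ that is integral over $\Q[\vec{x}]$, hence lies in $\Q[\vec{x}]$.

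Given $\alpha \in \mathcal O_{\mathbb K}$, first write $\alpha = \sum_{i=0}^{N-1} a_i \theta^i$ with $a_i \in \Q(\vec{x})$, using that $\{1,\theta,\ldots,\theta^{N-1}\}$ is a $\Q(\vec{x})$-basis of $\mathbb K$. Let $\theta_1,\ldots,\theta_N$ denote the roots of $m_\theta$ in an algebraic closure of $\Q(\vec{x})$, and let $\sigma_1,\ldots,\sigma_N$ be the corresponding $\Q(\vec{x})$-embeddings of $\mathbb K$, so that $\sigma_j(\theta)=\theta_j$. Applying each $\sigma_j$ yields the linear system
\[
\sigma_j(\alpha) \;=\; \sum_{i=0}^{N-1} a_i\, \theta_j^{\,i}, \qquad j=1,\ldots,N,
\]
which in matrix form reads $V\vec{a}=\vec{\sigma(\alpha)}$, where $V$ is the Vandermonde matrix with entries $V_{j,i}=\theta_j^{\,i}$.

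Next I would symmetrise by multiplying on the left by $V^{T}$, obtaining $(V^{T}V)\vec{a}=V^{T}\vec{\sigma(\alpha)}$. The $(i,k)$-entry of $V^{T}V$ is $\sum_j \theta_j^{\,i+k}=\Tr_{\mathbb K/\Q(\vec{x})}(\theta^{i+k})$, and the $i$-th entry of $V^{T}\vec{\sigma(\alpha)}$ is $\sum_j \theta_j^{\,i}\sigma_j(\alpha)=\Tr_{\mathbb K/\Q(\vec{x})}(\theta^{i}\alpha)$. Since $\theta$ and $\alpha$ both lie in $\mathcal O_{\mathbb K}$, so do $\theta^{i+k}$ and $\theta^{i}\alpha$, and by the preliminary observation all these traces lie in $\Q[\vec{x}]$. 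Moreover, $\det(V^{T}V)=\det(V)^{2}=\prod_{j<k}(\theta_k-\theta_j)^{2}=\mathrm{disc}(m_\theta)$, which is a nonzero element of $\Q[\vec{x}]$ (nonzero because $m_\theta$ is separable, as we are in characteristic zero).

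To finish, apply Cramer's rule to solve for each $a_i$: we obtain $\mathrm{disc}(m_\theta)\cdot a_i = \det(M_i)$, where $M_i$ is the matrix obtained from $V^{T}V$ by replacing its $i$-th column with $V^{T}\vec{\sigma(\alpha)}$. All entries of $M_i$ are in $\Q[\vec{x}]$, so $\det(M_i)\in\Q[\vec{x}]$, and therefore $\mathrm{disc}(m_\theta)\cdot a_i\in\Q[\vec{x}]$ for every $i$. This exactly says $\alpha\in\frac{1}{\mathrm{disc}(m_\theta)}\sum_{i=0}^{N-1}\Q[\vec{x}]\theta^{i}$. The only point that requires care, and which I view as the main (small) obstacle compared with the classical number-ring version, is justifying $\Tr(\mathcal O_{\mathbb K})\subseteq\Q[\vec{x}]$; this is where integral closedness of $\Q[\vec{x}]$ in $\Q(\vec{x})$ is essential, but it is standard for UFDs.
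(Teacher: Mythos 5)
Your argument is correct and is essentially the paper's own proof: both expand $\alpha$ in the power basis $1,\theta,\ldots,\theta^{N-1}$, apply the $\Q(\vec{x})$-embeddings to obtain a Vandermonde system, solve by Cramer's rule, identify the square of the Vandermonde determinant with $\mathrm{disc}(m_\theta)$, and conclude via the integral closedness of $\Q[\vec{x}]$ in $\Q(\vec{x})$. The only cosmetic difference is that you symmetrise by $V^{T}$ first, so that integral closedness is applied to the traces $\Tr(\theta^{i+k})$ and $\Tr(\theta^{i}\alpha)$, whereas the paper applies Cramer's rule to the unsymmetrised system and then applies integral closedness to the Galois-invariant, integral numerator $\det(D_i)\det(D)$.
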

\begin{proof}
Given $\alpha \in \mathcal{O}_{\mathbb K}$, 
write $\alpha =
\sum_{i=0}^{N-1} q_i \theta^i$, where $q_0,\ldots,q_{N-1} \in
\Q(\vec{x})$.  Let $\sigma_0,\ldots,\sigma_{N-1}$ be a list of the
monomorphisms from $\Q(\vec{x})$ to $\mathbb{K}$ that fix
$\Q(\vec{x})$.  Applying these monomorphisms to the previous equation
gives $\sigma_j(\alpha) = \sum_{i=0}^{N-1} q_i \sigma_j(\theta^i)$ for
$j=0,\ldots,N-1$.  Solving this system of linear equations for
$q_0,\ldots,q_{N-1}$ using Cramer's rule we obtain
\begin{gather}
q_i = \frac{\det(D_i)}{\det(D)} =
\frac{\det(D_i)\det(D)}{\det(D)^2} \, ,
\label{eq:disc}
\end{gather}
where $D = (\sigma_j(\theta^i))_{i,j}$ and $D_i$ is the matrix
obtained from $D$ by replacing the $i$th column with the vector 
$(\sigma_0(\alpha),\ldots,\sigma_{N-1}(\alpha))^\top$.

The denominator $\det(D)^2$ on the right-hand side of~\eqref{eq:disc} is the
discriminant of $m_\theta$; see,
e.g. \cite[p. 15, Equation~(1.25.a)]{frohlich1991algebraic}. Moreover the 
numerator $\det(D_i)\det(D)$ on the right-hand side of ~\eqref{eq:disc} lies in $\Q(\vec{x})$ and is integral
over $\Q[\vec{x}]$.  Since $\Q[\vec{x}]$ is
integrally closed we thus have that $\det(D_i)\det(D) \in \Q[\vec{x}]$.
\end{proof}

\end{document}